\newtheorem{theorem}{Theorem}[section]
\newtheorem{lemma}{Lemma}[section]
\theoremstyle{definition}
\begin{document}

\large
\title{\LARGE \bf On absolute continuity of the spectrum of a periodic magnetic 
Schr\"{o}dinger operator}
\author{\Large L.I.~Danilov \medskip \\
\large Department of Theoretical Physics, Physical-Technical Institute \\
\large 132, Kirov Street, Izhevsk 426000, Russia \\
\large e-mail: danilov@otf.pti.udm.ru}
\date{}
\maketitle

\begin{abstract}

We consider the Schr\"odinger operator in ${\mathbb R}^n$, $n\geq 3$, with the
electric potential $V$ and the magnetic potential $A$ being periodic functions
(with a common period lattice) and prove absolute continuity of the spectrum
of the operator in question under some conditions which, in particular, are
satisfied if $V\in L^{n/2}_{\mathrm {loc}}({\mathbb R}^n)$ and $A\in H^q
_{\mathrm {loc}}({\mathbb R}^n;{\mathbb R}^n)$, $q>(n-1)/2$.

\end{abstract}

\vskip 1.0cm

The paper is concerned with the problem of absolute continuity of the spectrum of a 
periodic magnetic Schr\"odinger operator. Periodic elliptic differential operators 
arise in many areas of mathematical physics. The stationary Schr\"odinger operator
\begin{equation} \label{alpha} 
-\Delta +V(x),\qquad x\in {\mathbb R}^n,
\end{equation}
with a periodic electric potential $V$ plays an important role in the quantum solid
state theory (see, e.g., \cite{A,RS4}). We should also mention the periodic
Maxwell operator (see \cite{J,Ku1,M1}), the generalized periodic magnetic 
Schr\"odinger operator
\begin{equation} \label{beta} 
\sum\limits_{j,l=1}^n\, \bigl( -i \, \frac {\partial}{\partial x_j}-A_j\bigr)
\, G_{jl}\, \bigl( -i \, \frac {\partial}{\partial x_l}-A_l\bigr) +V, \qquad
x\in {\mathbb R}^n,
\end{equation}
with the electric potential $V$ and the magnetic potential $A$, where $\{ G_{jl}\} $
is a positive definite matrix function (see \cite{BSu99}), and the periodic Dirac
operator (see, e.g., \cite{OO,Takada, Loucks} and also \cite{TMF90,TMF00}). The
operator \eqref{beta} for $A\equiv 0$ and $V\equiv 0$ is also used in studying
of periodic acoustic media.

It is well known that the spectra of periodic elliptic operators have a band-gap
structure. In \cite {Th}, for the periodic electric potential $V\in L^2_{{\mathrm 
{loc}}}({\mathbb R}^3)$, Thomas proved absolute continuity of the spectrum of
operator \eqref{alpha} on $L^2({\mathbb R}^3)$. In particular, this means that the
spectrum of operator \eqref{alpha} does not contain any eigenvalues, hence the
spectral bands do not collapse into a point. In \cite{Ku,GN}, it was proved that
the singular continuous part is missing from the spectra of periodic elliptic
operators. Therefore absolute continuity of the spectra of these operators is
equivalent to the absence of eigenvalues.  

In \cite{F}, Filonov presented examples of periodic operators
$$  
\sum\limits_{j,l=1}^n\, \bigl( -i \, \frac {\partial}{\partial x_j}\bigr)
\, G_{jl}\, \bigl( -i \, \frac {\partial}{\partial x_l}\bigr)
$$
in ${\mathbb R}^n$, $n\geq 3$, whose spectra have eigenvalues (of infinite 
multiplicity), where $\{ G_{jl}\} $ are some positive definite periodic matrix 
functions which belong to all H\"older classes $C^{\alpha }$, $\alpha <1$.

Since eigenfunctions corresponding to eigenvalues are considered as bound states and
ones that correspond to the absolutely continuous spectrum are interpreted as
propagating modes, the absolute continuity of the spectrum is physically important
property. In the last decade many papers were devoted to the problem of absolute 
continuity of the spectra of periodic elliptic operators. The papers 
\cite{BSu99,Su,KuL1,Ku2} contain a survey of relevant results.

In this paper we consider the periodic Schr\"{o}dinger operator
\begin{equation} \label{0.1} 
\widehat H(A,V)=\sum\limits_{j=1}^n\bigl( -i \, \frac {\partial}{\partial x_j}-
A_j\bigr) ^2+V
\end{equation}
acting on $L^2({\mathbb R}^n)$, $n\geq 2$, where the electric potential $V:{\mathbb 
R}^n\to {\mathbb R}$ and the magnetic potential $A:{\mathbb R}^n\to {\mathbb R}^n$   
are periodic functions with a common period lattice $\Lambda \subset {\mathbb R}^n$.

The coordinates in ${\mathbb R}^n$ are taken relative to an orthogonal basis 
$\{ {\mathcal E}_j\} $. Let $K$ be the fundamental domain of the lattice 
$\Lambda $, $\{ E_j\} $ the basis in the lattice $\Lambda $, $\Lambda ^*$ 
the reciprocal lattice with the basis vectors $E_j^*$ satisfying the conditions
$(E_j^*,E_l)=\delta _{jl}\, $ (where $\delta _{jl}$ is the Kronecker delta).

The scalar products and the norms on the spaces ${\mathbb C}^M$, $L^2({\mathbb R}^n;
{\mathbb C}^M)$, and $L^2(K;{\mathbb C}^M)$, where $M\in {\mathbb N}$, are introduced 
in the usual way (as a rule, omitting the notation for the corresponding space). 
We suppose that the scalar products are linear in the second argument.
Let $H^q({\mathbb R}^n;{\mathbb C}^M)$, $q\geq 0$, be the Sobolev class, 
$\widetilde H^q(K;{\mathbb C}^M)$ the set of functions $\phi :K\to {\mathbb C}^M$
whose $\Lambda $-periodic extensions belong to $H^q_{\mathrm {loc}}({\mathbb R}^n;
{\mathbb C}^M)$; $\widetilde H^q(K)\doteq \widetilde H^q(K;{\mathbb C})$. In what 
follows, the functions defined on the fundamental domain $K$ will be also identified 
with their $\Lambda $-periodic extensions to ${\mathbb R}^n$. We let
$$
\phi _N=v^{-1}\, (K)\, \int\limits_K\phi (x)\, e^{-2\pi i\, (N,x)}\, d x\, , 
\qquad N\in \Lambda ^*,
$$
denote the Fourier coefficients of the functions $\phi \in L^1(K;{\mathbb C}^M)$, 
$v(.)$ is the Lebesgue measure on ${\mathbb R}^n$.

Let $\| .\| _p$ be the norm on the space $L^p(K)$, $p\geq 1$. Denote by $L^p_w(K)$ 
the space of measurable functions ${\mathcal W}:K\to {\mathbb C}$ which satisfy
the condition
$$
\| {\mathcal W}\| _{p,\, w}\doteq \, \sup\limits_{t\, >\, 0}\, t\, (v(\{ x\in K :
|{\mathcal W}(x)|>t\} ))^{1/p}<+\infty \, .
$$
For ${\mathcal W}\in L^p_w(K)$, we also write
$$
\| {\mathcal W}\| _{p,\, w}^{(\infty )}\, \doteq \, {\overline {\sup\limits_{t\, \to 
\, +\infty }}}\ t\, (v(\{ x\in K : |{\mathcal W}(x)|>t\} ))^{1/p}\, ;
$$
$L^p_{w,\, 0}(K)=\{ {\mathcal W}\in L^p_w(K) : \| {\mathcal W}\| ^{(\infty )}
_{p,\, w}=0\} $.

In the following, we assume that the form $(\phi ,V\phi )$, $\phi \in
H^1({\mathbb R}^n)$, has a bound less than $1$ relative to the form 
$\sum\limits_j\, \bigl\| \frac {\partial \phi }{\partial x_j}\bigr\| ^2$, $\phi \in 
H^1({\mathbb R}^n)$, (in particular, it is true if $V\in L^{n/2}_w(K)$ and $\| V\|
^{(\infty )}_{n/2,\, w}$ is sufficiently small) and for the magnetic
potential $A$ the estimate
\begin{equation} \label{0.2}
\| |A|\phi \| \leq \, \varepsilon \, \biggl( \sum\limits_{j=1}^n\, \bigl\| \frac 
{\partial \phi }{\partial x_j}\bigr\| ^2\biggr) ^{1/2}+C_{\varepsilon }\, \| \phi \| 
\, ,\qquad \phi \in H^1({\mathbb R}^n)\, ,
\end{equation}
holds for any $\varepsilon >0$, where $C_{\varepsilon }=C_{\varepsilon }(n;A)\geq 0$. 
Under these conditions the quadratic form
$$
W(A,V;\phi ,\phi )=\sum\limits_{j=1}^n\, \bigl\| \bigl( -i \, \frac {\partial}
{\partial x_j}-A_j\bigr) \phi \bigr\| ^2+(\phi ,V\phi ) \, ,\qquad \phi \in 
H^1({\mathbb R}^n)\, ,
$$
with the domain $Q(W)=H^1({\mathbb R}^n)\subset L^2({\mathbb R}^n)$ is closed 
and semi-bounded from below. Therefore the form $W$ generates the self-adjoint 
operator \eqref{0.1} with some domain $D(\widehat H(A,V))\subset H^1({\mathbb R}^n)$.

The problem of absolute continuity of the spectra of two-dimensional periodic
Schr\"odinger operators \eqref{beta} and \eqref{0.1} has been thoroughly studied
(see \cite{BSu97,BSu98,M,BSuSht,L,Sh1,Sht1,Sht2,Sht3,Sht4,TMF03,Izv}) and optimal
conditions on the electric potential $V$ and the magnetic potential $A$ have
already been obtained. In particular, for the two-dimensional operator \eqref{0.1}, 
absolute continuity of the spectrum was proved if the form $(\phi ,V\phi )$ has 
a zero bound relative to the form $\sum\limits_j\, \bigl\| \frac {\partial 
\phi }{\partial x_j}\bigr\| ^2$, $\phi \in H^1({\mathbb R}^2)$, and for the magnetic 
potential $A$, estimate \eqref{0.2} holds for all $\varepsilon >0$ (see \cite{Sht4} 
and also \cite{Izv}). In \cite{RS4}, the results of the paper \cite{Th} were 
generalized on $n$-dimensional Schr\"odinger operators \eqref{alpha} with the periodic 
potentials $V$ for which $V\in L^2_{{\mathrm {loc}}}({\mathbb R}^n)$, $n=2,3$, and 
$\sum\limits_{N\, \in \, \Lambda ^*}|V_N|^q<+\infty $, $1\leq q<(n-1)/(n-2)\, $, 
$n\geq 4$. For $n\geq 3$, absolute continuity of the spectrum of the Schr\"odinger 
operator \eqref{0.1} was established by Sobolev (see \cite{Sob}) for the periodic 
potentials $V\in L^p(K)$, $p>n-1$, and $A\in C^{2n+3}({\mathbb R}^n;{\mathbb R}^n)$. 
These conditions on the potentials $V$ and $A$ (for $n\geq 3$) were relaxed in 
subsequent papers. In \cite{BSu99}, it was supposed that $V\in L^{n/2}_{w,\, 0}(K)$ 
for $n=3,4$ and $V\in L^{n-2}_{w,\, 0}(K)$ for $n\geq 5$. In \cite{KuL1,KuL2}, the 
constraint on the magnetic potential $A$ was relaxed up to $A\in H^q_{{\mathrm 
{loc}}}({\mathbb R}^n;{\mathbb R}^n)$, $2q>3n-2$. In \cite{Dep00,MZ}, for the 
magnetic potential $A\in C^1({\mathbb R}^n;{\mathbb R}^n)$ it was assumed that
either $A\in H^q_{{\mathrm {loc}}}({\mathbb R}^n;{\mathbb R}^n)$, $2q>n-2$, or 
$\sum\limits_{N\, \in \, \Lambda ^*}\| A_N \| _{{\mathbb C}^n}<+\infty $, and 
$V\in L^p_w(K)$, $\| V\| ^{(\infty )}_{p,\, w}\leq C^{\, \prime }$, where $p=n/2$ 
for $n=3,4,5,6$ and $p=n-3$ for $n\geq 7$, $C^{\, \prime }=C^{\, \prime }(n,\Lambda 
;A)>0$. The absolute continuity of the spectrum of the Schr\"odinger operator with 
the periodic potential $V\in L^{n/2}_w(K)$ for which $\| V\| ^{(\infty )}_{n/2,\, 
w}$ is sufficiently small was proved in \cite{Sh2} for all $n\geq 3$ (and for 
$A\equiv 0$). The periodic electric potentials $V$ from the Kato class and from 
the Morrey class were also considered in \cite{Sh1} and \cite{Sh3}, respectively.
For $n\geq 3$, the periodic Schr\"odinger operator \eqref{0.1} and its generalization
\eqref{beta} were also considered in \cite{K,SuSht,Sh4,Fr,ShZh}. In 
\cite{Sh1,Sh2,Sh3}, for $n\geq 3$ and $A\equiv 0$, the optimal conditions on the 
periodic electric potential $V$ were approached in terms of standard functional
spaces (but it is believed that the known conditions on the periodic magnetic
potential $A$ are not optimal for $n\geq 3$). In Theorem \ref{th0.1} we relax
conditions on the periodic potentials $V$ and $A$. 

If the periodic Schr\"odinger 
operator \eqref{0.1} has the period lattice $\Lambda ={\mathbb Z}^n$, $n\geq 3$, and 
is invariant under the substitution $x_1\to -x_1\, $, then its spectrum is absolutely 
continuous under the conditions $A\in L^q_{{\mathrm {loc}}}({\mathbb R}^n;{\mathbb 
R}^n)$, $q>n$, and $V\in L^{n/2}_{{\mathrm {loc}}}({\mathbb R}^n)$ (see \cite{TF}). 

For the vectors $x\in {\mathbb R}^n\backslash \{ 0\} $ we shall use the notation
$$
S_{n-2}(x)=\{ \widetilde e\in S_{n-1} : (\widetilde e,x)=0\} \, ,
$$
where $S_{n-1}=\{ y\in {\mathbb R}^n : |y|=1\} $.

Let ${\mathcal B}({\mathbb R})$ be the collection of Borel subsets ${\mathcal
O}\subseteq {\mathbb R}$, ${\mathfrak M}$ the set of even signed Borel measures 
$\mu :{\mathcal B}({\mathbb R})\to {\mathbb R}$,
$$
\| \mu \| =\sup\limits_{{\mathcal O}\, \in \, {\mathcal B}({\mathbb R})}\ \bigl( 
|\mu ({\mathcal O})|+|\mu ({\mathbb R}\backslash {\mathcal O})| \bigr) <+\infty \, ,
\qquad \mu \in {\mathfrak M}\, .
$$
Denote by ${\mathfrak M}_h\, $, $h>0$, the set of measures $\mu \in {\mathfrak M}$
such that
$$
\int\limits_{{\mathbb R}}e^{\, i pt}\, d \mu (t)=1
$$
for all $p\in (-h,h)$. In particular, the set ${\mathfrak M}_h\, $ contains the
Dirac measure $\delta (.)$.

The following theorem is the main result of this paper.

\begin{theorem} \label{th0.1}
Let $n\geq 3$ and let $A:{\mathbb R}^n\to {\mathbb R}^n$ be a periodic function
with a period lattice $\Lambda \subset {\mathbb R}^n$. Fix a vector $\gamma \in
\Lambda \backslash \{ 0\} $. Suppose that the magnetic potential $A\in L^2(K;
{\mathbb R}^n)$ satisfies the following two conditions:

$(A_1)$ the map
$$
{\mathbb R}^n\ni x\to \{ [0,1]\ni \xi \to A(x-\xi \gamma )\} \in L^2([0,1];{\mathbb
R}^n)
$$

is continuous$\mathrm ;$

$(A_2)$ there is a measure $\mu \in {\mathfrak M}_h\, $, $h>0$, such that
\begin{equation} \label{0.3}
\theta (\Lambda ,\gamma ,h,\mu ;A)\, \doteq \, \frac {|\gamma |}{\pi }\
\max\limits_{x\, \in \, {\mathbb R}^n}\ \max\limits_{\widetilde e\, \in \, S_{n-2}
(\gamma )}\ \bigl| \, A_0-\int\limits_{{\mathbb R}}d \mu (t)\, 
\int\limits_0^1A(x-\xi \gamma -t\widetilde e)\, d \xi \, \bigr| <\, 1\, ,
\end{equation} 

where $A_0=v^{-1}\, (K)\, \int\limits_KA(x)\, d x$ $\mathrm( $and $|.|$ denotes 
the Euclidean norm on ${\mathbb R}^n$$\mathrm )$. \\
Then there exists a number $C=C(n,\Lambda ;A)>0$ such that for all electric
potentials $V=V_1+V_2\, $, where $V_1\in L^{n/2}_w(K;{\mathbb R})$ and
$V_2\in L^1(K;{\mathbb R})$ are $\Lambda $-periodic functions for which
\begin{equation} \label{0.4}
\| V_1\| ^{(\infty )}_{n/2,\, w}\, \leq \, C
\end{equation} 
and
\begin{equation} \label{0.5}
{\mathrm {ess}}\, \sup\limits_{\hskip -0.7cm x\, \in \, {\mathbb R}^n}\ 
\int\limits_0^1|V_2(x-\xi \gamma )|\, d \xi <+\infty \, ,
\end{equation}
the spectrum of the periodic Schr\"odinger operator \eqref{0.1} is absolutely 
continuous.
\end{theorem}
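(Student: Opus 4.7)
The plan is to apply the Floquet--Bloch transform to decompose $\widehat H(A,V)$ on $L^2(\mathbb R^n)$ as a direct integral of fiber operators $\widehat H(k;A,V)$ acting on the periodic Sobolev space $\widetilde H^2(K)\subset L^2(K)$, parameterized by quasimomenta $k$ in the Brillouin zone $K^*=\mathbb R^n/(2\pi\Lambda^*)$. Each fiber operator has compact resolvent. Since the results cited in the introduction exclude singular continuous spectrum for our operator class, absolute continuity reduces to the absence of eigenvalues of $\widehat H(A,V)$, equivalently to the statement that no band function $k\mapsto\lambda_j(k)$ is constant. By Thomas's criterion this follows if, for each $\lambda\in\mathbb R$, the operator family $k\mapsto\widehat H(k;A,V)-\lambda$ admits an analytic continuation to some complex $k=k_*$ which is boundedly invertible on $\widetilde H^2(K)$.

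I would extend $k\to k+i\tau e$ along the direction $e=\gamma/|\gamma|^2$ (so that $(e,\gamma)=1$), with a large real parameter $\tau$ chosen of the form $(2m+1)\pi/|\gamma|$ to avoid the Thomas resonance set $\{N\in\Lambda^*:|2\pi N+k+i\tau e|^2=\lambda\}$. Expanding,
\[
\widehat H(k+i\tau e;A,V)=\widehat H_0(k+i\tau e)-2(A,-i\nabla+k+i\tau e)+|A|^2+V,
\]
the dangerous contribution is the cross-term $-2i\tau(e,A)$, linear in $\tau$. To neutralize it, I would conjugate the operator by $e^{\Psi_\tau}$, where the scalar function $\Psi_\tau$ is constructed by integrating $(e,A)$ along lines parallel to $\gamma$ and then smoothing transversally via convolution against the measure $\mu\in\mathfrak M_h$. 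Condition $(A_1)$ ensures this fibered integration is well-defined and continuous, while $(A_2)$ precisely guarantees that the residual gauge-transformed magnetic potential $A-\nabla\Psi_\tau$ admits a uniform pointwise bound strictly below the threshold $\pi/|\gamma|$ that emerges from the subsequent resolvent estimate.

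After the gauge fix, apply the second resolvent identity. The free resolvent $(\widehat H_0(k_*)-\lambda)^{-1}$ is diagonal in the Fourier basis $\{e^{2\pi i(N,x)}\}_{N\in\Lambda^*}$ with eigenvalues $(|2\pi N+k_*|^2-\lambda)^{-1}$; the Thomas-type choice of $\tau$ keeps these eigenvalues uniformly bounded and provides the required off-diagonal decay. The contribution of $V_1\in L^{n/2}_w(K)$ is controlled via a Birman--Suslina--Sobolev operator inequality, the smallness hypothesis $\|V_1\|^{(\infty)}_{n/2,\,w}\leq C$ yielding operator norm strictly less than a fixed fraction of unity. The contribution of $V_2$ is bounded by a one-dimensional Young inequality along the $\gamma$-direction: the essential boundedness of $\int_0^1|V_2(x-\xi\gamma)|\,d\xi$ provided by \eqref{0.5} is exactly what makes this estimate close as $\tau\to\infty$. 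The gauged magnetic term, the quadratic $|A|^2$ piece (form-bounded via \eqref{0.2} with zero infinitesimal bound), and the remaining $-2(A,-i\nabla+k)$ contribution all vanish in operator norm as $\tau\to\infty$ thanks to the effectiveness of the gauge and the threshold from $(A_2)$. A Neumann series then produces the invertibility of $\widehat H(k_*;A,V)-\lambda$, completing the proof.

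The principal obstacle is the gauge construction and the extraction of the sharp norm bound from $(A_2)$. The measure $\mu\in\mathfrak M_h$ plays a two-sided role: its condition $\int e^{ipt}d\mu(t)=1$ for $|p|<h$ leaves the low-frequency modes of $A$ in the transverse direction untouched, so that the dangerous $\tau$-linear interaction with $(e,A)$ is genuinely cancelled by the gauge, while for the higher-frequency modes $\widehat\mu$ may contract $A$ enough to meet $\theta<1$. The uniformity of $(A_2)$ over all transverse directions $\widetilde e\in S_{n-2}(\gamma)$ is essential because the resolvent perturbation bound must hold simultaneously for every $\widetilde e$ that appears in the Fourier decomposition of the residual $A-\nabla\Psi_\tau$; this anisotropic feature, combined with the low regularity $A\in L^2(K;\mathbb R^n)$ and the need to absorb $|A|^2$ through \eqref{0.2} rather than directly, is what forces the elaborate formulation of $(A_2)$ and dictates the sharpness of the constant $|\gamma|/\pi$.
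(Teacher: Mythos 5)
Your high-level scaffolding matches the paper: Gel'fand (Floquet--Bloch) decomposition into a direct integral of fiber operators, reduction of absolute continuity to the absence of eigenvalues, Thomas's complex shift along $\gamma$, and a three-way decomposition of the perturbation into a singular electric part $V_1$, a directionally integrable part $V_2$, and the magnetic potential. Your treatment of $V_2$ by a one-dimensional estimate along $\gamma$ and of $V_1$ by a weak-$L^{n/2}$ bound with a small ``tail'' constant are also in the right spirit, though the paper's actual route to the $V_1$ estimate passes through the Tomas--Stein restriction theorem (Theorem \ref{th1.2} and Lemma \ref{l2.4}) combined with a dyadic decomposition of Fourier modes relative to $G_N^-$, not through an off-the-shelf Birman--Suslina--Sobolev inequality; the precise statement needed is a two-sided $\widehat L^{1/2}$-weighted bound, and your sketch does not produce that form.

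The genuine gap is in the treatment of the magnetic potential. You propose to conjugate by $e^{\Psi_\tau}$ where $\Psi_\tau$ integrates $(e,A)$ along $\gamma$-lines and is then smoothed transversally by $\mu$, claiming that $(A_2)$ then yields a uniform pointwise bound on the residual $A-\nabla\Psi_\tau$ below $\pi/|\gamma|$. This does not work as stated. First, a scalar gauge $\Psi_\tau$ that is built only from the parallel component $A_\parallel$ cannot shrink the transverse components $A_\perp$ pointwise; a gradient $\nabla\Psi_\tau$ constructed from $A_\parallel$ is constrained by being curl-free and hence cannot cancel an arbitrary $A_\perp$. Second, $(A_2)$ is a bound on $|A_0-\int d\mu\int_0^1 A(x-\xi\gamma-t\widetilde e)\,d\xi|$ uniformly in $x$ and in $\widetilde e$; this is a statement about averaged $A$ smoothed by a (generally non-smooth, signed) measure, and there is no step that converts it into a \emph{pointwise} bound on $A-\nabla\Psi_\tau$. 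Third, with $A\in L^2(K)$ the proposed $\Psi_\tau$ lacks the regularity needed to make $e^{\Psi_\tau}$ a bounded map on $\widetilde H^1(K)$.

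The paper circumvents all of this in a structurally different way. It first splits $A=A^{(0)}+A^{(1)}$ with $A^{(0)}$ a trigonometric polynomial and $A^{(1)}$ having small $\|\,|A^{(1)}|\,\|_{2,\gamma}$ (Lemma \ref{l3.1} and estimates \eqref{3.5}--\eqref{3.11} dispose of $A^{(1)}$ by complex interpolation in the fractional powers $\widehat G_\pm^\zeta$), and then treats $A^{(0)}$ via the periodic magnetic Dirac operator: $\widehat{\mathcal D}^2(A;k+i\varkappa e)=\widehat H(A;k+i\varkappa e)\otimes\widehat I_M+\widehat{\mathcal B}(A)$, equation \eqref{3.12}. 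The condition $(A_2)$ is used, not through a gauge, but through Theorem \ref{th3.1} (cited from \cite{Arch}), which gives a quantitative lower bound for $\widehat{\mathcal D}(A^{(0)};\cdot)$ in terms of the anisotropic projections $\widehat P^\pm$. A second Stein-interpolation argument (the family $\widehat{\mathcal Q}(\zeta)$) upgrades the Dirac bound to the $\widehat L^{\pm1/2}$-weighted invertibility of $\widehat H(A;k+i\varkappa e)$ that feeds Theorem \ref{th1.1}. Your proposal contains no substitute for Theorem \ref{th3.1}; filling that hole is the actual mathematical content of $(A_2)$ and is not recoverable by the gauge heuristic.
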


Theorem \ref{th0.1} is proved in Section 1.
\vskip 0.2cm

{\bf Remark 1.} Under the conditions of Theorem \ref{th0.1}, the number 
$C=C(n,\Lambda ;A)$ in inequality \eqref{0.4} is chosen sufficiently small so that 
the form $(\phi ,V_1\phi )$ has a bound less than $1$ relative to the form 
$\sum\limits_j\, \bigl\| \frac {\partial \phi }{\partial x_j}\bigr\| ^2$, $\phi 
\in H^1({\mathbb R}^n)$. Furthermore, from \eqref{0.5} it follows that the form 
$(\phi ,V_2\phi )$ has a zero bound relative to the form $\sum\limits_j\, \bigl\| 
\frac {\partial \phi }{\partial x_j}\bigr\| ^2$, and the condition $(A_1)$ implies 
that inequality \eqref{0.2} holds for all $\varepsilon >0$. Hence the periodic
Schr\"odinger operator \eqref{0.1} is generated by the quadratic form $W(A,V;\phi ,
\phi )$, $\phi \in H^1({\mathbb R}^n)$, which is closed and semi-bounded from below.
\vskip 0.1cm

{\bf Remark 2.} Instead of condition \eqref{0.4} one can admit the weakened
condition
$$
\lim\limits_{r\, \to \, +0}\ \sup\limits_{x\, \in \, {\mathbb R}^n}\ 
{\overline {\sup\limits_{t\, \to \, +\infty }}}\ t\, (v(\{ y\in B_r(x) : 
|V_1(y)|>t\} ))^{2/n}\, \leq \, C
$$
(with another constant $C=C(n,\Lambda ;A)>0$), where $B_r(x)=\{ y\in {\mathbb R}^n
: |x-y|\leq r\} $ is a closed ball of radius $r>0$ centered at $x\in 
{\mathbb R}^n$.
\vskip 0.2cm

{\bf Remark 3.} For the periodic magnetic potential $A$ the condition $(A_2)$
is fulfilled (under an appropriate choice of the vector $\gamma \in \Lambda 
\backslash \{ 0\} $ and the measure $\mu \in {\mathfrak M}_h\, $, $h>0$) if $A\in 
\widetilde H^q(K;{\mathbb R}^n)$, $2q>n-2$ (see \cite{TMF00,Dep00}). If $2q>n-1$, 
then the condition $(A_1)$ is fulfilled as well. For the choice of the Dirac measure
$\mu =\delta $ in the condition $(A_2)$, inequality \eqref{0.3} is valid whenever
\begin{equation} \label{0.6}
\sum\limits_{N\, \in \, \Lambda ^*\backslash \{ 0\} \, :\, (N,\gamma )\, =\, 0}
\| A_N\| _{{\mathbb C}^n}<\frac {\pi }{|\gamma |}\, .
\end{equation}
Moreover, inequality \eqref{0.6} holds under an appropriate choice of the vector
$\gamma \in \Lambda \backslash \{ 0\} $ if $\sum\limits_{N\, \in \, \Lambda ^*}\| 
A_N \| _{{\mathbb C}^n}<+\infty $ (see \cite{TMF00,Dep00}).
\vskip 0.2cm

The proof of Theorem \ref{th0.1} follows the method suggested by Thomas in \cite{Th}. 
In this paper we apply estimates for the periodic electric potential $V_1\in L^{n/2}
_w(K;{\mathbb R})$ (see \eqref{1.3} and Theorem \ref{th1.2}) which are derived 
as a consequence of the Tomas -- Stein inequality for the restriction of the 
Fourier transform to the unit sphere (see a survey on such estimates in 
\cite{T1,T2}). Besides, the estimates are obtained for $L^2$-norms (unlike \cite{Sh2}) 
so this allows us to study the Schr\"odinger operator \eqref{0.1} with the magnetic
potential $A$. For the proof of Theorem \ref{th0.1}, we also apply assertions 
for the periodic magnetic Dirac operator (see Theorem \ref{th3.1}) proved in
\cite{Vest,Arch}. 

The proof of Theorem \ref{th0.1} is presented in Section 1. Theorem \ref{th1.2} and
Theorem \ref{th1.3} from Section 1 are proved in Section 2 and Section 3, 
respectively.

In the paper we use the notation $C$ (with subscripts and superscripts or without 
them) for constants which are not necessarily the same at each occurrence but we shall 
explicitely indicate on what these constants depend.

\section{Proof of Theorem \ref{th0.1}}

For $k\in {\mathbb R}^n$, $e\in S_{n-1}\, $, and $\varkappa \in {\mathbb R}$, let
$$
W(A;k+i \varkappa e;\psi ,\phi )=\sum\limits_{j=1}^n\bigl( \bigl( -i \, 
\frac {\partial}{\partial x_j}-A_j+k_j-i \varkappa e_j\bigr) \psi ,\bigl( -i 
\, \frac {\partial}{\partial x_j}-A_j+k_j+i \varkappa e_j\bigr) \phi \bigr)
$$
be a sesquilinear form with the domain $Q(W(A;k+i\varkappa e;.,.))=\widetilde 
H^1(K)\subset L^2(K)$. Under the conditions imposed on the potentials $A$ and 
$V$, the quadratic form $(\phi ,V\phi )$ has a bound less than $1$ relative to the 
forms $W(0;k;\phi ,\phi )$, $k\in {\mathbb R}^n$, $\phi \in \widetilde H^1(K)$. 
Therefore,
$$
W(A,V;k+i \varkappa e;\psi ,\phi )\doteq W(A;k+i \varkappa e;\psi ,\phi )+
(\psi ,V\phi )\, ,\qquad \psi ,\phi \in \widetilde H^1(K)\, ,
$$
is a closed sectorial sesquilinear form generating an $m$-sectorial operator 
$\widehat H(A;k+i \varkappa e)+V$ (with the domain $D(\widehat H(A;k+i
\varkappa e)+V)\subset \widetilde H^1(K)\subset L^2(K)$ independent of the complex 
vector $k+i \varkappa e\in {\mathbb C}^n$). If $A\in C^1({\mathbb R}^n;{\mathbb 
R}^n)$, then
$$
\widehat H(A;k+i \varkappa e)=\sum\limits_{j=1}^n\bigl( -i \, \frac {\partial}
{\partial x_j}-A_j+k_j+i \varkappa e_j\bigr) ^2
$$
and $D(\widehat H(A;k+i \varkappa e))=\widetilde H^2(K)$. The operators $\widehat 
H(A;k)+V$ (for $\varkappa =0$) are self-adjoint and have compact resolvent. This
implies that they have a discrete spectrum. For fixed vectors $k\in {\mathbb R}^n$ 
and $e\in S_{n-1}\, $, the operators $\widehat H(A;k+\zeta e)+V$, $\zeta \in 
{\mathbb C}$, form a self-adjoint analytic family of type $(B)$ (see \cite{Kato}). 

The operator $\widehat H(A,V)$ is unitarily equivalent to the direct integral
\begin{equation} \label{1.1}
\int_{2\pi K^*}^{\, \bigoplus}(\widehat H(A;k)+V)\,
\frac {d k}{(2\pi )^n\, v(K^*)}\ ,  
\end{equation}
where $K^*$ is the fundamental domain of the lattice $\Lambda ^*$. The unitary
equivalence is established via the Gel'fand transformation (see \cite{BSu99,Sh2}). 
Let $\lambda _j(k)$, $j\in {\mathbb N}$, be the eigenvalues of the operators
$\widehat H(A;k)+V$ arranged in non-decreasing order with the multiplicity. The
spectrum of the operator $\widehat H(A;V)$ has a band-gap structure and consists of
the union of the ranges $\{ \lambda _j(k):k\in 2\pi K^*\} $ of the band functions
$\lambda _j(k)$, $j\in {\mathbb N}$, which are continuous and piecewise analytic.
The singular spectrum of the operator \eqref{0.1} is empty (see \cite{Ku,GN} and for
an elementary proof of this fact also see \cite{Dep96,FS}) and if $\lambda \in 
{\mathbb R}$ is an eigenvalue of the operator $\widehat H(A,V)$, then the 
decomposition of the operator $\widehat H(A,V)$ into the direct integral \eqref{1.1} 
implies that the number $\lambda $ is an eigenvalue of the operators $\widehat H(A;k)
+V$ for a positive measure set of vectors $k\in 2\pi K^*$ (i.e. $v(\{ k\in 2\pi K^*:
\lambda _j(k)=\lambda \} )>0$ for some $j\in {\mathbb N}$). Therefore, by analytic 
Fredholm theorem, it follows that the number $\lambda $ is an eigenvalue of the 
operators $\widehat H(A;k+i \varkappa e)+V$ for all $k+i \varkappa e\in 
{\mathbb C}^n$ (see \cite{Ku,Ku2}). Hence, to prove absolute continuity of 
the spectrum of operator \eqref{0.1}, it suffices for any $\lambda \in {\mathbb R}$
to find vectors $k\in {\mathbb R}^n$, $e\in S_{n-1}$ and a number $\varkappa \geq 0$ 
such that the number $\lambda $ is not an eigenvalue of the operator $\widehat H(A;k+
i \varkappa e)+V$. Since the operators $\widehat H(A;k+i \varkappa e)+V$ are
generated by the forms $W(A,V;k+i \varkappa e;\psi ,\phi )$, $\psi ,\phi \in 
\widetilde H^1(K)$ (i.e. $(\psi ,(\widehat H(A;k+i \varkappa e)+V)\phi )=
W(A,V;k+i \varkappa e;\psi ,\phi )$ for all $\psi \in \widetilde H^1(K)$ and $\phi 
\in D(\widehat H(A;k+i \varkappa e)+V)\subset \widetilde H^1(K)$), we conclude that
Theorem \ref{th0.1} follows from Theorem \ref{th1.1}
in which for a given vector $\gamma \in \Lambda \backslash \{ 0\} $ (in particular)
it is proved that for any $\lambda \in {\mathbb R}$ the operators $\widehat H(A;k+i 
\varkappa |\gamma |^{-1}\gamma )+V-\lambda $ are invertable for all vectors $k\in 
{\mathbb R}^n$ with $|(k,\gamma )|=\pi $, and all sufficiently large numbers 
$\varkappa >0$ (dependent on $\lambda \in {\mathbb R}$).

Fix a vector $\gamma \in \Lambda \backslash \{ 0\} $; $e=|\gamma |^{-1}\gamma
\in S_{n-1}\, $. For vectors $x\in {\mathbb R}^n$ we write $x_{\| }\doteq
(x,e)$, $x_{\perp}\doteq x-(x,e)e$. For all $N\in \Lambda ^*$, $k\in {\mathbb
R}^n$, and $\varkappa \geq 0$, introduce the notation
$$
G^{\pm}_N=G^{\pm}_N(k+i \varkappa e)\doteq \bigl( |k_{\|}+2\pi N_{\|}|^2+
(\varkappa \pm |k_{\perp}+2\pi N_{\perp}|)^2\bigr) ^{1/2}.
$$
If $|(k,\gamma )|=\pi $, then $G^{-}_N\geq \pi |\gamma |^{-1}$, $G^{+}_N\geq 
\varkappa $, and $G^{+}_NG^{-}_N\geq 2\pi |\gamma |^{-1}\varkappa $. The equality 
$$
\widehat H(0;k+i \varkappa e)\phi =\sum\limits_{N\, \in \, \Lambda ^*}(k+2\pi N+
i \varkappa e)^2 \phi _N\, e^{\, 2\pi i\, (N,x)}\, ,\qquad \phi \in 
\widetilde H^2(K)\, ,
$$
holds, where $|(k+2\pi N+i \varkappa e)^2|=G^{+}_NG^{-}_N\, $. Denote by $\widehat
L=\widehat L(k+i \varkappa e)$ the nonnegative operator on $L^2(K)$:
$$
\widehat L\phi =\sum\limits_{N\, \in \, \Lambda ^*}G^{+}_NG^{-}_N\, \phi _N\, 
e^{\, 2\pi i \, (N,x)}\, ,\qquad \phi \in D(\widehat L)=\widetilde H^2(K)\, .
$$
For the operator $\widehat L^{1/2}$, one has $D(\widehat L^{1/2})=\widetilde H^1(K)$.

\begin{theorem} \label{th1.1}
Let $n\geq 3$. Suppose the periodic magnetic potential $A:{\mathbb R}^n\to 
{\mathbb R}^n$ with the period lattice $\Lambda \subset {\mathbb R}^n$ satisfies
the conditions $(A_1)$ and $(A_2)$ of Theorem \ref{th0.1} and the 
function $V_2\in L^1(K;{\mathbb R})$ obeys condition \eqref{0.5} for the fixed
vector $\gamma \in \Lambda \backslash \{ 0\} $. Then there exist numbers $C=C(n,
\Lambda ;A)>0$ and $C^{\, \prime }=C^{\, \prime }(n,\Lambda ;A)>0$ such that for
any function $V_1\in L^{n/2}_w(K;{\mathbb R})$ with $\| V_1\| ^{(\infty )}_{ 
n/2,\, w}\leq C$, and any $\lambda \in {\mathbb R}$ there is a number $\varkappa 
_0>0$ such that for all $\varkappa \geq \varkappa _0\, $, all vectors $k\in 
{\mathbb R}^n$ with $|(k,\gamma )|=\pi $, and all functions $\phi \in \widetilde 
H^1(K)$ the inequality
$$
\sup\limits_{\psi \, \in \, \widetilde H^1(K)\, :\, \| \widehat L^{1/2}(k+i
\varkappa e)\psi \| \, \leq \, 1}|W(A,V_1+V_2-\lambda ;k+i \varkappa e;\psi ,\phi )|
\geq C^{\, \prime }\, \| \widehat L^{1/2}(k+i \varkappa e)\phi \| 
$$
holds.
\end{theorem}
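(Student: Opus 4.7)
The plan is to bound $W(A,V_1+V_2-\lambda;k+i\varkappa e;\cdot,\cdot)$ from below relative to $\hat L^{1/2}$ by separating the ``free'' form, the magnetic perturbation, and the two electric pieces, and showing that the first dominates the others once $\varkappa$ is large.

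\textbf{Free part.} In the Fourier expansion $\psi=\sum_N\psi_N\,e^{2\pi i(N,x)}$ the form $W(0;k+i\varkappa e;\psi,\phi)$ equals $v(K)\sum_N(k+2\pi N+i\varkappa e)^2\overline{\psi_N}\phi_N$, while $\|\hat L^{1/2}\psi\|^2=v(K)\sum_N G^+_N G^-_N|\psi_N|^2$ and $|(k+2\pi N+i\varkappa e)^2|=G^+_N G^-_N$. Writing $U=|k+2\pi N|^2-\varkappa^2$ and $V=2\varkappa(k_\|+2\pi N_\|)$, the identities $(G^+_N G^-_N)^2=U^2+V^2$ and $|(k+2\pi N+i\varkappa e)^2-\lambda|^2=(U-\lambda)^2+V^2$, together with $|V|\geq 2\pi\varkappa/|\gamma|$ on the hyperplane $|(k,\gamma)|=\pi$, give
\[ |(k+2\pi N+i\varkappa e)^2-\lambda|^2\;\geq\;\tfrac{1}{2}(G^+_N G^-_N)^2 \]
uniformly in $N$ for $\varkappa\geq\varkappa_0(\lambda)$. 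A Cauchy--Schwarz duality in the $\ell^2$-norm weighted by $G^+_N G^-_N$ then yields the lower bound $\sup_{\|\hat L^{1/2}\psi\|\leq 1}|W(0,-\lambda;k+i\varkappa e;\psi,\phi)|\geq 2^{-1/2}\|\hat L^{1/2}\phi\|$ for the free-minus-$\lambda$ contribution.

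\textbf{Magnetic and electric perturbations.} The difference $W(A;\cdot)-W(0;\cdot)$ produces first-order cross-terms $-\sum_j\bigl[((-i\partial_j+k_j-i\varkappa e_j)\psi,A_j\phi)+(A_j\psi,(-i\partial_j+k_j+i\varkappa e_j)\phi)\bigr]$ and the zeroth-order $(A\psi,A\phi)$. Following a Thomas--Kuchment--Birman--Suslina-type strategy, I would conjugate unitarily by a phase function built from the $\mu$-smoothed primitive of $A$ along $\gamma$ (the measure $\mu\in\mathfrak M_h$ substituting for pointwise calculus when $A$ is only $L^2$): condition (A$_2$) is engineered so that $\theta(\Lambda,\gamma,h,\mu;A)<1$ is exactly the sharp effective coupling of the residual first-order cross-term relative to $\hat L^{1/2}$, while condition (A$_1$) absorbs the remainder into an $o(1)$ term as $\varkappa\to\infty$ via a compactness argument using the $L^2([0,1];\mathbb R^n)$-continuity of $\xi\mapsto A(x-\xi\gamma)$. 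The quadratic $|A|^2$ term is controlled via (0.2). For the electric pieces, Theorem \ref{th1.2} gives $|(\psi,V_1\phi)|\leq C_1(n,\Lambda)\|V_1\|^{(\infty)}_{n/2,w}\|\hat L^{1/2}\psi\|\|\hat L^{1/2}\phi\|$, so choosing $C$ in (0.4) small enough makes this contribution arbitrarily small; and $V_2$ under (0.5) is bounded by $o_{\varkappa\to\infty}(1)\|\hat L^{1/2}\psi\|\|\hat L^{1/2}\phi\|$ via Theorem \ref{th3.1} (the periodic magnetic Dirac estimate) combined with the factor $G^+_N\geq\varkappa$ in $\hat L$.

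\textbf{Assembly and main obstacle.} Summing the bounds (magnetic: $\theta+o(1)<1$; $V_1$: arbitrarily small; $V_2$: $o(1)$) and taking $\varkappa\geq\varkappa_0(\lambda,A,V_2)$ large enough, the total perturbation is strictly less than $2^{-1/2}$, leaving a positive coercivity constant $C'=C'(n,\Lambda;A)>0$. The principal obstacle is the magnetic step: implementing the unitary phase reduction when $A$ is merely $L^2$, extracting the sharp constant $\theta$, and using (A$_1$) to make the remainder uniformly $o(1)$ in $k$ over the entire family $|(k,\gamma)|=\pi$. A secondary difficulty is the treatment of $V_2$, whose non-smallness is tolerated only through the Dirac-operator machinery of Theorem \ref{th3.1} and the preferred direction $\gamma$ built into (0.5).
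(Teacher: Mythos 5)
Your decomposition inverts the one the paper actually uses, and that inversion creates a gap you yourself identify as the ``principal obstacle.'' The paper takes the \emph{full magnetic form} $W(A;k+i\varkappa e;\cdot,\cdot)$ as the coercive piece: Theorem~\ref{th1.3} gives
$\sup_{\|\widehat L^{1/2}\psi\|\le 1}|W(A;k+i\varkappa e;\psi,\phi)|\ge C_1\|\widehat L^{1/2}\phi\|$,
and then $V_1$ and $V_2-\lambda$ are treated as small perturbations (estimates \eqref{1.3} and \eqref{1.4}), giving $C'=\tfrac12 C_1$ and $C<\sqrt{C_1/2}\,\widetilde C^{-1}$. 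You instead take the free-minus-$\lambda$ form as the coercive piece and try to treat the magnetic terms as a perturbation controlled by $\theta<1$. This would require you to re-derive the whole gauge-transformation/Dirac-operator machinery (the content of Theorem~\ref{th3.1} and Theorem~\ref{th1.3}) inside the proof of Theorem~\ref{th1.1}; the paper does not do this, it simply invokes Theorem~\ref{th1.3}. Your sketch of that machinery (``conjugate unitarily by a phase function built from the $\mu$-smoothed primitive of $A$'') is a plan rather than an argument, and it is precisely the hard part that Theorem~\ref{th1.3} packages away. Moreover, with your explicit constant $2^{-1/2}$ for the free-minus-$\lambda$ part, the assembly $2^{-1/2}-\theta-o(1)$ can be negative whenever $\theta>2^{-1/2}$, so even granting the magnetic bound the conclusion would not follow for all admissible $\theta\in[0,1)$. (A sharper lower bound $(1-\delta)\|\widehat L^{1/2}\phi\|$ is available by the same calculation, but you do not extract it.)

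There is also a concrete error in the treatment of $V_2$: you say it is ``bounded by $o(1)\|\widehat L^{1/2}\psi\|\|\widehat L^{1/2}\phi\|$ via Theorem~\ref{th3.1},'' but Theorem~\ref{th3.1} is the periodic magnetic Dirac estimate used in the proof of Theorem~\ref{th1.3} and has nothing to do with $V_2$. The $V_2$ term (together with the constant $-\lambda$) is controlled by Lemma~\ref{l1.1}, a one-dimensional Sobolev estimate along the direction $\gamma$, using condition \eqref{0.5}; this produces estimate \eqref{1.4} with an $\varepsilon^2\|V_2-\lambda\|_{1,\gamma}$ prefactor that can be made small by taking $\varkappa$ large. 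Your proposal never mentions Lemma~\ref{l1.1}. In short: the missing idea is that Theorem~\ref{th1.3} already supplies the magnetic coercivity, so Theorem~\ref{th1.1} reduces to two perturbation bounds — one from Theorem~\ref{th1.2} (for $V_1$, after taking $\mathcal W=\sqrt{|V_1|}$) and one from Lemma~\ref{l1.1} (for $V_2-\lambda$) — plus a straightforward choice of constants.
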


Theorem \ref{th1.1} is a consequence of Theorems \ref{th1.2} and \ref{th1.3} and
Lemma \ref{l1.1}.

\begin{theorem} \label{th1.2}
Let $n\geq 3$. Suppose a $\Lambda $-periodic function ${\mathcal W}:{\mathbb R}^n
\to {\mathbb R}$ belongs to the space $L^n_w(K)$, $\gamma \in \Lambda \backslash 
\{ 0\} $ $\mathrm ($and $e=|\gamma |^{-1}\gamma $$\mathrm )$. Then there are numbers
$\widetilde C=\widetilde C(n)>0$ and $\varkappa _0>0$ such that for all $\varkappa 
\geq \varkappa _0\, $, all vectors $k\in {\mathbb R}^n$ with $|(k,\gamma )|=\pi $, 
and all functions $\phi \in \widetilde H^1(K)$ the inequality
$$
\| {\mathcal W}\phi \| \leq \widetilde C\, \| {\mathcal W}\| _{n,\, w}\,
\| \widehat L^{1/2}(k+i \varkappa e)\phi \| 
$$
is fulfilled.
\end{theorem}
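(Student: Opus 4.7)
\emph{Proof proposal.} The plan is a $TT^*$ reduction combined with the Stein--Tomas restriction theorem adapted to the periodic setting. Setting $\psi = \widehat{L}^{1/2}(k+i\varkappa e)\phi$, the claim is equivalent to the operator-norm bound $\|\mathcal{W}\widehat{L}^{-1/2}\|_{L^2(K)\to L^2(K)}\leq\widetilde{C}\,\|\mathcal{W}\|_{n,w}$. By the $TT^*$ identity and the reality of $\mathcal{W}$, this in turn is equivalent to $\|\mathcal{W}\widehat{L}^{-1}\mathcal{W}\|_{L^2\to L^2}\leq\widetilde{C}^{\,2}\|\mathcal{W}\|_{n,w}^{\,2}$. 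Alternatively, by the Lorentz-space Hölder inequality $L^{n,\infty}(K)\cdot L^{2n/(n-2),2}(K)\hookrightarrow L^2(K)$, it suffices to show that $\widehat{L}^{-1/2}$ maps $L^2(K)$ into the Lorentz space $L^{2n/(n-2),2}(K)$ with norm uniform in large $\varkappa$ and in $k$ with $|(k,\gamma)|=\pi$.

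The symbol $G_N^+G_N^-=|(k+2\pi N+i\varkappa e)^2|$ vanishes on the codimension-two surface $\{(k+2\pi N,e)=0,\,|k_{\perp}+2\pi N_{\perp}|=\varkappa\}$. The hypothesis $|(k,\gamma)|=\pi$ forces $|k_{\|}+2\pi N_{\|}|\geq\pi/|\gamma|$ for every $N\in\Lambda^*$, so the lattice frequencies avoid this surface; the closest lattice points lie in the hyperplane $|k_{\|}+2\pi N_{\|}|=\pi/|\gamma|$, on which the geometry collapses to that of an $(n-2)$-sphere of radius $\varkappa$ inside $\mathbb{R}^{n-1}$. I then dyadically decompose $\widehat{L}^{-1}$ according to shells $S_j=\{2^j\pi/|\gamma|\leq G_N^-<2^{j+1}\pi/|\gamma|\}$, $j\geq 0$; on $S_j$ one has $(G_N^+G_N^-)^{-1}\lesssim(2^j\pi/|\gamma|\cdot\max(\varkappa,2^j\pi/|\gamma|))^{-1}$, so the ``critical'' range $2^j\lesssim\varkappa|\gamma|/\pi$ gains an extra factor $\varkappa^{-1}$.

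For each critical $j$, the shell $S_j$ is a $2^j$-neighborhood of the $(n-2)$-sphere of radius $\varkappa$. I would then invoke the Stein--Tomas restriction theorem for $S^{n-2}\subset\mathbb{R}^{n-1}$, whose dual inequality reads $\|\widehat{g\, d\sigma}\|_{L^{2n/(n-2)}(\mathbb{R}^{n-1})}\leq C\|g\|_{L^2(S^{n-2})}$; its $\delta$-neighborhood (Bochner--Riesz type) version, combined with the symbol bound above and a routine summation in the $e$-direction (where the symbol grows like the usual Laplacian and causes no trouble), yields an $L^{2n/(n+2)}\to L^{2n/(n-2)}$ bound for $\widehat{L}^{-1}$ restricted to $S_j$. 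Taking square roots, summing the resulting geometric series in $j$ — the critical range contributes a decaying factor $\varkappa^{-\alpha}$ for some $\alpha>0$, while the tail $2^j\gtrsim\varkappa$ is controlled by a standard Sobolev embedding — produces the required bound on $\widehat{L}^{-1/2}:L^2(K)\to L^{2n/(n-2),2}(K)$, which is not only uniform but in fact tends to $0$ as $\varkappa\to\infty$, so that $\varkappa_0$ can be chosen depending only on $n$.

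The main obstacle is the transfer of the continuum Stein--Tomas inequality to the periodic lattice-sum formulation. This requires (i) lattice-point counting in curved dyadic shells near the $(n-2)$-sphere of radius $\varkappa$, valid once $\varkappa$ is large enough that lattice sums can be compared with integrals over the corresponding neighborhoods; (ii) exploitation of the universal lower bound $|k_{\|}+2\pi N_{\|}|\geq\pi/|\gamma|$ to reduce from $n$-dimensional to $(n-1)$-dimensional restriction geometry via Fubini in the $e$-direction; and (iii) correct propagation of the weak-$L^n$ norm of $\mathcal{W}$ through Lorentz-space Hölder or Young inequalities without spoiling the $\varkappa^{-\alpha}$ decay. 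The interplay of these three ingredients, rather than any single one of them, is the technical heart of the argument.
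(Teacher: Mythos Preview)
Your strategy coincides with the paper's: the proof in Section~2 uses exactly the Stein--Tomas inequality for $S^{n-2}\subset\mathbb{R}^{n-1}$ (Lemma~2.1), thickens it to the $n$-dimensional tube $\widetilde{\mathcal K}_a$ via Fubini in the $e$-direction (Lemma~2.2), transfers to the torus by a Schwartz cut-off with disjoint Fourier supports (Lemma~2.3), and then performs a dyadic decomposition in $G_N^-$ with a Sobolev tail estimate for $G_N^-\gtrsim\varkappa$. The only methodological difference is that the paper handles $\mathcal W\in L^n_w$ by real interpolation at the shell level (Lemma~2.4, interpolating the $L^2\!\to\!L^q$ estimate against the trivial $L^2$ and $L^\infty$ endpoints, at the cost of a harmless $(\varkappa/a)^\varepsilon$ loss), whereas you propose the equivalent route of a single Lorentz--H\"older inequality $L^{n,\infty}\cdot L^{2n/(n-2),2}\hookrightarrow L^2$ applied after the dyadic summation.

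One point needs correction. The operator bound you are assembling does \emph{not} tend to $0$ as $\varkappa\to\infty$; it is merely uniformly bounded. On the shell $G_N^-\sim a$ (with $a\lesssim\varkappa$), Lemma~2.3 gives $\|\phi_j\|_{L^q}\lesssim a^{1/2+1/n}\varkappa^{1/2-1/n}\|\phi_j\|_{L^2}$, while $(G_N^+G_N^-)^{-1/2}\sim(a\varkappa)^{-1/2}$, so the $L^2\!\to\!L^q$ norm of $\widehat L^{-1/2}$ on that shell is $\sim (a/\varkappa)^{1/n}$; the top shell $a\sim\varkappa$ contributes $O(1)$, and the geometric sum over smaller $a$ converges to a constant independent of $\varkappa$. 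This is precisely why Theorem~\ref{th0.1} requires the smallness hypothesis $\|V_1\|^{(\infty)}_{n/2,w}\le C$ rather than allowing arbitrary $V_1\in L^{n/2}_w$. Relatedly, $\varkappa_0$ cannot depend on $n$ alone: the transfer step (Lemma~2.3) and the tail estimate (2.22)--(2.23) require $\varkappa\gtrsim\mathrm{diam}\,K^*$, so $\varkappa_0$ depends on the lattice $\Lambda$ as well.
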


For $\Lambda $-periodic functions ${\mathcal V}:{\mathbb R}^n\to {\mathbb R}$ from
the space $L^p(K)$, $p=1,2$, and for the fixed vector $\gamma \in \Lambda \backslash 
\{ 0\} $ we write
$$
\| {\mathcal V}\| _{p,\, \gamma }={\mathrm {ess}}\, \sup\limits_{\hskip -0.7cm 
x\, \in \, {\mathbb R}^n}\ \biggl( \, \int\limits_0^1|{\mathcal V}(x-\xi \gamma )|^p
\, d \xi \biggr) ^{1/p}.
$$

\begin{theorem} \label{th1.3}
Let $n\geq 3$, ${\mathfrak a}\geq 0$, $\Theta \in [0,1)$. Suppose the periodic
magnetic potential $A:{\mathbb R}^n\to {\mathbb R}^n$ with the period lattice 
$\Lambda \subset {\mathbb R}^n$ satisfies the conditions $(A_1)$ and 
$(A_2)$ of Theorem \ref{th0.1} for the fixed vector $\gamma \in \Lambda 
\backslash \{ 0\} $ $\mathrm ($$e=|\gamma |^{-1}\gamma $$\mathrm )$ and, moreover,
$\| \, |A|\, \| _{2,\, \gamma }\leq {\mathfrak a}$ and $\theta (\Lambda ,\gamma ,h,
\mu ;A)\leq \Theta $. Then there exist numbers $C_1=C_1(n,\Lambda ,|\gamma |,h,\| 
\mu \| ;{\mathfrak a},\Theta )>0$ and $\varkappa _0>0$ such that for all $\varkappa 
\geq \varkappa _0\, $, all vectors $k\in {\mathbb R}^n$ with $|(k,\gamma )|=\pi $, 
and all functions $\phi \in \widetilde H^1(K)$ the estimate
\begin{equation} \label{a}
\sup\limits_{\psi \, \in \, \widetilde H^1(K)\, :\, \| \widehat L^{1/2}(k+i
\varkappa e)\psi \| \, \leq \, 1}|W(A;k+i \varkappa e;\psi ,\phi )|\geq
C_1\, \| \widehat L^{1/2}(k+i \varkappa e)\phi \|
\end{equation}
holds.
\end{theorem}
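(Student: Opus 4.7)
The plan is to reduce the inequality \eqref{a} to a first-order (Dirac-type) operator bound supplied by Theorem \ref{th3.1}. I first note that \eqref{a} is equivalent to the invertibility, with norm of the inverse at most $1/C_1$, of the sandwich $\widehat L^{-1/2}(k+i\varkappa e)\,\widehat H(A;k+i\varkappa e)\,\widehat L^{-1/2}(k+i\varkappa e)$ on $L^2(K)$. For $A\equiv 0$ this sandwich acts on each Fourier mode $N$ by multiplication by $(k+2\pi N+i\varkappa e)^2/(G_N^+G_N^-)$, a unimodular symbol, so the bound is trivial with $C_1=1$. It therefore suffices to control the $A$-perturbation $W(A;k+i\varkappa e;\psi,\phi)-W(0;k+i\varkappa e;\psi,\phi)$, which splits into the cross terms $-((-i\nabla+k-i\varkappa e)\psi,A\phi)-(A\psi,(-i\nabla+k+i\varkappa e)\phi)$ together with the quadratic term $(|A|^2\psi,\phi)$, measured against the weight $\widehat L^{1/2}$.

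The next step is to exploit the factorization of the free problem. Decompose $x=x_{\|}e+x_{\perp}$, $k=k_{\|}e+k_{\perp}$, and set $D_{\|}=-i\partial_{x_{\|}}+k_{\|}$, $D_{\perp}=-i\nabla_{\perp}+k_{\perp}$. Then $\widehat H(0;k+i\varkappa e)=(D_{\|}+i\varkappa)^2+|D_{\perp}|^2=(D_{\|}+i\varkappa-i|D_{\perp}|)(D_{\|}+i\varkappa+i|D_{\perp}|)$, the two factors commute, and their moduli on Fourier mode $N$ are $G_N^-$ and $G_N^+$ respectively; moreover $|(k,\gamma)|=\pi$ forces the spectral gap $G_N^-\geq\pi/|\gamma|$. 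Adjoining anticommuting matrices promotes this factorization to a first-order matrix-valued operator $\mathcal D(A;k+i\varkappa e)$ on $L^2(K;\mathbb C^M)$ whose square equals $\widehat H(A;k+i\varkappa e)\otimes I$ up to a remainder linear in $A$, which is controlled by condition $(A_1)$ and by $\||A|\|_{2,\gamma}\leq\mathfrak a$. A Cauchy--Schwarz factorization of $W(A;k+i\varkappa e;\cdot,\cdot)$ in terms of $\mathcal D(A;k+i\varkappa e)$ then reduces \eqref{a} to the lower bound $\|\mathcal D(A;k+i\varkappa e)\phi\|\gtrsim\|\widehat L^{1/2}\phi\|$ valid for all $\varkappa\geq\varkappa_0$.

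That lower bound is precisely the content of Theorem \ref{th3.1} applied at the complex momentum $k+i\varkappa e$: its hypotheses $(A_1)$, $(A_2)$, $\||A|\|_{2,\gamma}\leq\mathfrak a$ and $\theta(\Lambda,\gamma,h,\mu;A)\leq\Theta<1$ match ours exactly, and its constant depends only on $n,\Lambda,|\gamma|,h,\|\mu\|,\mathfrak a,\Theta$, yielding the desired $C_1$ after bookkeeping. The main obstacle I expect is the critical-type integrability of $A$: the resolvent improvement provided by $\varkappa$ acts only along $\gamma$, so a naive H\"older estimate on $(|A|^2\psi,\phi)$ or on the cross terms leaves an $O(1)$ loss that does not vanish as $\varkappa\to\infty$. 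The structural role of $(A_2)$ together with the measure $\mu\in\mathfrak M_h$ (whose Fourier transform equals $1$ on $(-h,h)$) is to let one gauge away the dangerous low-transverse-frequency part of $A$, trading it against the $\pi/|\gamma|$ spectral gap of $D_{\|}$ secured by $|(k,\gamma)|=\pi$; the condition $\Theta<1$ quantifies this trade-off and is the arithmetic reason why $(A_2)$ takes its particular form. Once Theorem \ref{th3.1} is in hand, the remaining work is to choose $\varkappa_0$ large enough to absorb the $\varkappa^{-1/2}$-small remainders from the Dirac reduction and to track the dependence of $C_1$ on the listed parameters.
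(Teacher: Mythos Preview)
Your overall strategy---reduce to a Dirac-type operator and invoke Theorem~\ref{th3.1}---matches the paper's, but the central reduction you propose does not work. You claim that \eqref{a} follows from a first-order lower bound $\|\mathcal D(A;k+i\varkappa e)\phi\|\gtrsim\|\widehat L^{1/2}\phi\|$ and that this bound ``is precisely the content of Theorem~\ref{th3.1}''. Neither assertion is correct. Already for $A\equiv 0$, on a Fourier mode $N$ with $\phi_N$ in the range of $\widehat P^{\,-}_{\widetilde e(k+2\pi N)}$ one has $\|\widehat{\mathcal D}_N\phi_N\|=G_N^-\|\phi_N\|$ while $\|\widehat L^{1/2}\phi_N\|=(G_N^+G_N^-)^{1/2}\|\phi_N\|$; the ratio $(G_N^-/G_N^+)^{1/2}$ is of order $\varkappa^{-1/2}$ when $|k_\perp+2\pi N_\perp|$ is close to $\varkappa$, so no uniform bound of the type you claim exists. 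Correspondingly, Theorem~\ref{th3.1} does \emph{not} state a symmetric $\widehat L^{1/2}$-bound: it gives the asymmetric estimate \eqref{3.15}, which mixes the projections $\widehat P^{\,\pm}$ with the distinct weights $\widehat{\mathbb G}_{\pm}$ and a small auxiliary parameter $a$. The passage from \eqref{3.15} to the symmetric bound \eqref{3.26} is the substantive analytic work you have skipped; the ``Cauchy--Schwarz factorization of $W$'' you allude to cannot supply it, since $W(A;k+i\varkappa e;\cdot,\cdot)$ is not a positive form.

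The paper's route is as follows. First, Theorem~\ref{th3.1} also requires $A_N=0$ for $2\pi|N_\perp|>R$, so one must smooth $A$ to a trigonometric polynomial $A^{(0)}$ and control the remainder $A^{(1)}$ against $\widehat L^{1/2}$; this is done in \eqref{3.1}--\eqref{3.11} via Lemma~\ref{l3.1} (itself a three-lines interpolation). Second, Theorem~\ref{th3.1} is applied twice, to $\gamma$ and to $-\gamma$, giving \eqref{3.17} and \eqref{3.19}; their composition yields a weighted lower bound on the \emph{square} $\widehat{\mathcal D}^2$ in \eqref{3.20}, and the curl remainder $\widehat{\mathcal B}(A)$ from \eqref{3.12} is absorbed by choosing $a$ small (this is where the reduction to trigonometric polynomials is actually used). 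Third, the substitution $\widehat\alpha_1\mapsto-\widehat\alpha_1$ swaps $\widehat P^{\,+}$ and $\widehat P^{\,-}$ and produces the complementary inequality \eqref{3.22}. Finally, \eqref{3.23} and \eqref{3.24} are the boundary values of an analytic family $\widehat{\mathcal Q}(\zeta)$, and Stein interpolation at $\zeta=\tfrac12$ converts the two asymmetric $(\widehat{\mathbb G}_+,\widehat{\mathbb G}_-)$-weighted bounds into the single symmetric $\widehat L^{1/2}$-bound \eqref{3.26}. Your proposal contains no substitute for any of these steps.
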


\begin{lemma} \label{l1.1}
Let $n\geq 2$. Suppose a $\Lambda $-periodic function ${\mathcal V}:{\mathbb R}^n\to 
{\mathbb R}$ belongs to the space $L^2(K)$ $\mathrm ($and $\| {\mathcal V}\| _{2,\, 
\gamma }<+\infty $, where $\gamma \in \Lambda \backslash \{ 0\} $$\mathrm ;$ $e=
|\gamma |^{-1}\gamma $$\mathrm )$. Then for any $\varepsilon >0$ there is a
constant $C_{\, \varepsilon }=C_{\, \varepsilon }\, (n,|\gamma |)>0$ such that for
all vectors $k\in {\mathbb R}^n$ and all functions $\phi \in \widetilde H^1(K)$ 
the inequality
$$
\| {\mathcal V}\phi \| \, \leq \, \| {\mathcal V}\| _{2,\, \gamma }\, \biggl( 
\varepsilon \, v^{1/2}\, (K)\, \biggl( \, \sum\limits_{N\, \in \, \Lambda ^*}
|k_{\|}+2\pi N_{\|}|^2\, \| \phi _N\| ^2\biggr) ^{1/2}+C_{\, \varepsilon }\, 
\| \phi \| \biggr) 
$$
holds.
\end{lemma}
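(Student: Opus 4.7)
The plan is to reduce the desired inequality to the one-dimensional Sobolev embedding $H^1[0,|\gamma|]\hookrightarrow L^\infty[0,|\gamma|]$ along lines parallel to $\gamma$. First, Parseval's identity for the Fourier series on $K$ gives
\begin{equation*}
v^{1/2}(K)\Bigl(\sum_{N\in\Lambda^*}|k_{\|} + 2\pi N_{\|}|^2|\phi_N|^2\Bigr)^{1/2} = \bigl\|(-i\partial_e + k_{\|})\phi\bigr\|_{L^2(K)},
\end{equation*}
where $\partial_e := e\cdot\nabla$, so the target reads $\|\mathcal{V}\phi\| \le \|\mathcal{V}\|_{2,\gamma}\bigl(\varepsilon\|(-i\partial_e+k_{\|})\phi\| + C_\varepsilon\|\phi\|\bigr)$.

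The key preparatory step is an averaging identity. Any translate $K - \xi\gamma$ is again a fundamental domain of $\Lambda$, so for the $\Lambda$-periodic function $|\mathcal{V}\phi|^2$ and every $\xi \in \mathbb{R}$ one has $\int_K|\mathcal{V}(x)\phi(x)|^2\,dx = \int_K|\mathcal{V}(x-\xi\gamma)\phi(x-\xi\gamma)|^2\,dx$. Averaging over $\xi \in [0,1]$, interchanging the order of integration, and then substituting $s = x_{\|} - \xi|\gamma|$ (exploiting $|\gamma|$-periodicity of the integrand in the $e$-direction) yields
\begin{equation*}
\int_K|\mathcal{V}\phi|^2\,dx = \int_K F(x_{\perp})\,dx,\qquad F(x_{\perp}) := |\gamma|^{-1}\int_0^{|\gamma|}|\mathcal{V}(x_{\perp}+se)\phi(x_{\perp}+se)|^2\,ds.
\end{equation*}
Pulling the sup of $|\phi|^2$ out of the $s$-integral and using the definition of $\|\mathcal{V}\|_{2,\gamma}$ gives $F(x_{\perp})\le \|\mathcal{V}\|_{2,\gamma}^2\sup_{s\in[0,|\gamma|]}|\phi(x_{\perp}+se)|^2$.

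To control the supremum I apply the standard one-dimensional Sobolev inequality $\|g\|_\infty^2 \le \varepsilon \|g'\|_{L^2[0,|\gamma|]}^2 + C_\varepsilon(|\gamma|)\|g\|_{L^2[0,|\gamma|]}^2$ (valid for all $\varepsilon>0$) to $g(s) = e^{ik_{\|} s}\phi(x_{\perp} + se)$; since $|g(s)| = |\phi(x_{\perp}+se)|$ and $|g'(s)| = |(-i\partial_e+k_{\|})\phi(x_{\perp}+se)|$, the momentum shift $k_{\|}$ is absorbed into the derivative term. Integrating the resulting bound for $F(x_{\perp})$ over $K$ and invoking the identity
\begin{equation*}
\int_K\int_0^{|\gamma|}H(x_{\perp}+se)\,ds\,dx = |\gamma|\int_K H(x)\,dx,
\end{equation*}
valid for any $\Lambda$-periodic $H\ge 0$ (substitute $s = x_{\|} + u$, use $|\gamma|$-periodicity of $H$ in the $e$-direction, and $\Lambda$-translation invariance of $\int_K H\,dx$), with $H = |(-i\partial_e+k_{\|})\phi|^2$ and $H = |\phi|^2$, then taking square roots and rescaling $\varepsilon$ to absorb the resulting factor $|\gamma|^{1/2}$, yields the lemma with $C_\varepsilon = C_\varepsilon(n,|\gamma|)$. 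No serious obstacle is anticipated; the only delicate point is the averaging identity that converts the full $n$-dimensional integral into a $|\gamma|$-periodic one-dimensional line integral.
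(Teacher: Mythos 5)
Your proof is correct and fleshes out exactly the argument the paper hints at — the paper dismisses the lemma with one sentence ("immediately follows from simple estimates for functions from the Sobolev class $H^1_{\mathrm{loc}}({\mathbb R})$") plus a citation, and your argument supplies that: the Parseval identity converting the weighted Fourier sum to $\|(-i\partial_e+k_{\|})\phi\|$, the averaging identity reducing the $n$-dimensional integral to a $|\gamma|$-periodic line integral, the gauge trick $g(s)=e^{ik_{\|}s}\phi(x_{\perp}+se)$ to absorb $k_{\|}$, the 1D interpolation inequality $\|g\|_\infty^2\le\varepsilon\|g'\|_2^2+C_\varepsilon\|g\|_2^2$, and the rescaling of $\varepsilon$ at the end. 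All steps check out (the only implicit point, that the restriction of $\phi\in\widetilde H^1(K)$ to almost every line parallel to $e$ lies in $H^1$, is standard Fubini). This is the same approach as the paper intends, just written out in full.
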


Lemma \ref{l1.1} immediately follows from simple estimates for functions from
the Sobolev class $H^1_{{\mathrm {loc}}}({\mathbb R})$ (see, e.g., \cite{Diff}).
\vskip 0.2cm

{\it Proof} of Theorem \ref{th1.1}. If a $\Lambda $-periodic function $\mathcal W:
{\mathbb R}^n\to {\mathbb R}$ belongs to the space $L^{\infty }({\mathbb R}^n)$, then
the inequality
\begin{equation} \label{gamma}
\| {\mathcal W}\phi \| \leq \| {\mathcal W} \| _{\infty }\, \| \phi \| \leq
\biggl( \frac {|\gamma |}{2\pi \varkappa }\biggr) ^{1/2}\, 
\| {\mathcal W} \| _{\infty }\, \| \widehat L^{1/2}(k+i \varkappa e)\phi \| 
\end{equation}
is fulfilled for all $\varkappa >0$, all vectors $k\in {\mathbb R}^n$ with 
$|(k,\gamma )|=\pi $, and all functions $\phi \in \widetilde H^1(K)$. By Theorem 
\ref{th1.2} and estimate \eqref{gamma}, it follows that for a function ${\mathcal W}
\in L^n_w(K)$ and for any $\varepsilon >0$ (assuming the number $\varkappa _0>0$ to 
be sufficiently large) the inequality
\begin{equation} \label{1.2}
\| {\mathcal W}\phi \| \leq \widetilde C\, \bigl( \varepsilon ^2+(\| {\mathcal W}\| 
_{n,\, w}^{(\infty )})^2\bigr) ^{1/2}\, \| \widehat L^{1/2}(k+i \varkappa e)\phi 
\| 
\end{equation}
holds for all $\varkappa \geq \varkappa _0$, all vectors $k\in {\mathbb R}^n$ with 
$|(k,\gamma )|=\pi $, and all functions $\phi \in \widetilde H^1(K)$. Denoting 
${\mathcal W}=\sqrt {|V_1|}$ we have ${\mathcal W}\in L^n_w(K)$ and
$\| {\mathcal W}\| _{n,\, w}^{(\infty )}=\bigl( \| V_1\| _{n/2,\, w}
^{(\infty )}\bigr) ^{1/2}$. Hence from \eqref{1.2} (for all $\varkappa \geq
\varkappa _0\, $, all vectors $k\in {\mathbb R}^n$ with $|(k,\gamma )|=\pi $, 
and all functions $\psi ,\phi \in \widetilde H^1(K)$) we get
\begin{equation} \label{1.3}
|(\psi ,V_1\phi )|\leq
\widetilde C^{\, 2}\, \bigl( \varepsilon ^2+\| V_1\| _{n/2,\, w}
^{(\infty )}\bigr) \, \| \widehat L^{1/2}(k+i \varkappa e)\psi \| \cdot
\| \widehat L^{1/2}(k+i \varkappa e)\phi \| \, .
\end{equation}
By Lemma \ref{l1.1}, for any $\varepsilon >0$ there is a sufficiently large number 
$\varkappa _0>0$ such that the estimate
\begin{equation} \label{1.4}
|(\psi ,(V_2-\lambda )\phi )|\leq
\varepsilon ^2\, \| V_2-\lambda \| _{1,\, \gamma }\, 
\| \widehat L^{1/2}(k+i \varkappa e)\psi \| \cdot
\| \widehat L^{1/2}(k+i \varkappa e)\phi \| 
\end{equation}
is also valid for all $\lambda \in {\mathbb R}$, all $\varkappa \geq \varkappa _0\, 
$, all vectors $k\in {\mathbb R}^n$ with $|(k,\gamma )|=\pi $, and all functions 
$\psi ,\phi \in \widetilde H^1(K)$. Now, Theorem \ref{th1.1} is a direct 
consequence of Theorem \ref{th1.3} and estimates \eqref{1.3} and \eqref{1.4}. 
Furthemore, we can choose any positive number $C<\sqrt {\frac {C_1}2}\ \widetilde 
C^{-1}$ and put $C^{\, \prime }=\frac 12\, C_1\, $, where $\widetilde C$ and $C_1$ 
are constants from Theorems \ref{th1.2} and \ref{th1.3}. This completes the proof.   
\vskip 0.2cm

{\bf Remark 4.} For the vector $\gamma \in \Lambda \backslash \{ 0\} $ denote by
$\widetilde \gamma =\widetilde \gamma (\gamma )$ the vector of the lattice 
$\Lambda $ such that $\widetilde \gamma =t\gamma $, $t>0$, and $\tau  
\gamma \notin \Lambda $ for all $\tau \in (0,t)$. Let ${\mathfrak M}_{\, [0,1]}$
be the set of signed Borel measures defined on Borel subsets of the closed interval
$[0,1]$, and let ${\mathbb R}^n\ni x\to \mu (x;.)\in {\mathfrak M}_{\, [0,1]}$ be a
weakly measurable and $\Lambda $-periodic measure-valued function such that
$$
1)\ \ \int\limits_0^1f(\xi +\tau )\, \mu (x+\tau \widetilde \gamma ;d \xi )=
\int\limits_0^1f(\xi )\, \mu (x;d \xi )
$$ 
for all $x\in {\mathbb R}^n$, $\tau \in {\mathbb R}$ and all periodic functions 
$f\in C({\mathbb R})$ with the period $T=1$,
$$
2)\ \ \, m(\mu )\doteq {\mathrm {ess}}\, \sup\limits_{\hskip -0.7cm x\, \in \, 
{\mathbb R}^n}\ \int\limits_0^1|\mu (x;d \xi )|<+\infty \, ,
$$
where $|\mu (x;.)|$ is the variation of the measure $\mu (x;.)$, $x\in {\mathbb R}^n$.
Introduce the sesquilinear form
\begin{equation} \label{1.5}
{\mathcal M}(\psi ,\phi )=
\int\limits_Kd x\, \int\limits_0^1\, \overline \psi (x-\xi \widetilde \gamma )\,
\phi (x-\xi \widetilde \gamma )\, \mu (x;d \xi )\, ,\qquad \psi ,\phi \in 
\widetilde H^1(K)\, .
\end{equation}
For any $\varepsilon >0$, there is a number $\varkappa _0>0$ such that for all 
$\lambda \in {\mathbb R}$, all $\varkappa \geq \varkappa _0\, $, all vectors 
$k\in {\mathbb R}^n$ with $|(k,\gamma )|=\pi $, and all functions $\psi ,
\phi \in \widetilde H^1(K)$ (by analogy with inequality \eqref{1.4}) we get
\begin{equation} \label{1.6}
|{\mathcal M}(\psi ,\phi )-\lambda (\psi ,\phi )|\leq
\varepsilon ^2\, C(\mu ,\lambda )\, 
\| \widehat L^{1/2}(k+i \varkappa e)\psi \| \cdot
\| \widehat L^{1/2}(k+i \varkappa e)\phi \| \, ,
\end{equation}
where $C(\mu ,\lambda )=m(\mu )+|\lambda |$. Consequently, under the conditions of
Theorem \ref{th1.1}, instead of the form $(\psi ,V_2\phi )$ determined by the
function $V_2\, $ we can deal with the form ${\mathcal M}(\psi ,\phi )$, $\psi ,\phi 
\in \widetilde H^1(K)$, determined by the periodic measure-valued function 
${\mathbb R}^n\ni x\to \mu (x;.)$. Another conditions on the form \eqref{1.5}, for
which inequalities \eqref{1.6} are fulfilled (for all $\varepsilon >0$ and in
the case where $\varkappa \geq \varkappa _0$, $|(k,\gamma )|=\pi $) with some 
constants $C(\mu ,\lambda )>0$, can be found (for $n\geq 3$) in \cite{SuSht}.
\vskip 0.2cm

\section{Proof of Theorem \ref{th1.2}}

Let $S_{n-2}[\varkappa ]=\{ x^{\, \prime }\in {\mathbb R}^{n-1} : |x^{\, 
\prime }| =\varkappa \} $, $\varkappa >0$, $n\geq 3$, and let $\sigma ^{(\varkappa 
)}_{n-2}$ be the (invariant) surface measure on the sphere $S_{n-2}[\varkappa ]$; 
$S_{n-2}\doteq S_{n-2}[1]$. Define the numbers $p=p(n)=(2n)/(n+2)$ and $q=q(n)=
(2n)/(n-2)\, $; $1/p+1/q=1$. For all functions ${\mathcal F}$ from the
Schwartz space ${\mathcal S}({\mathbb R}^{n-1})$, the following Tomas --- Stein
estimate is valid: 
\begin{equation} \label{2.1}
\| \widehat {\mathcal F}\| _{L^2(S_{n-2};\, d\sigma ^{(1)}_{n-2})}\, \leq \, C\,
\| {\mathcal F}\| _{L^p({\mathbb R}^{n-1})}
\end{equation}
(see \cite{T,St}, and for $n=3$ also see \cite{Z}), where $C=C(n)>0$ and
$$
\widehat {\mathcal F}(k^{\, \prime })=\frac 1{(2\pi )^{n-1}}\ \int\limits_{{\mathbb
R}^{n-1}}{\mathcal F}(x^{\, \prime })\, e^{-i \, (k^{\, \prime },\, x^{\, 
\prime })}\, d x^{\, \prime }\, ,\qquad k^{\, \prime }\in {\mathbb R}^{n-1}\, ,
$$
denotes the Fourier transform of the function ${\mathcal F}$. Estimate \eqref{2.1} 
is a key point in the proof of Theorem \ref{th1.2}.

Let
$$
{\mathcal L}^{(n-1)}_a=\{ k^{\, \prime }\in {\mathbb R}^{n-1} : \varkappa -a
\leq |k^{\, \prime }|\leq \varkappa +a\} \, ,\qquad \varkappa >0\, ,\ \ 0<a\leq \frac 
34\, \varkappa \, .
$$ 
For functions $u\in L^2({\mathcal L}^{(n-1)}_a)$, we shall use the notation
$$
\breve u(x^{\, \prime })=\int\limits_{{\mathcal L}^{(n-1)}_a}u(k^{\, 
\prime })\, e^{\, i \, (k^{\, \prime },\, x^{\, \prime })}\, d k^{\, \prime 
}\, ,\qquad x^{\, \prime }\in {\mathbb R}^{n-1}\, .
$$
We have $\breve u\in C^{\infty }({\mathbb R}^{n-1})\cap L^s({\mathbb R}
^{n-1})$ for all $s\in [2,+\infty ]$.

\begin{lemma} \label{l2.1}
For any function $u\in L^2({\mathcal L}^{(n-1)}_a)$, the estimate
$$
\| \breve u\| _{L^q({\mathbb R}^{n-1})}\, \leq \, C_1\, a^{1/2}\varkappa
^{\, 1/q}\, \| u\| _{L^2({\mathcal L}^{(n-1)}_a)}
$$
holds, where $C_1=C_1(n)>0$.
\end{lemma}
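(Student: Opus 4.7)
My plan is to reduce the estimate on the annular region $\mathcal{L}^{(n-1)}_a$ to the Tomas--Stein inequality \eqref{2.1} on the unit sphere via a slicing argument in the radial variable. The first step is to pass from the restriction form \eqref{2.1} to its dual (extension) form on $S_{n-2}$: for $g\in L^2(S_{n-2};d\sigma^{(1)}_{n-2})$,
$$
\Bigl\|\, \int_{S_{n-2}} g(\omega)\,e^{\, i(x',\omega)}\,d\sigma^{(1)}_{n-2}(\omega)\Bigr\|_{L^q(\mathbb{R}^{n-1})} \leq C\,\|g\|_{L^2(S_{n-2};\,d\sigma^{(1)}_{n-2})}.
$$
Rescaling $\omega = r\omega'$, $\omega'\in S_{n-2}$, and tracking the factors ($d\sigma^{(r)}_{n-2} = r^{n-2}d\sigma^{(1)}_{n-2}$ on the one hand, and $dy' = r^{n-1}dx'$ under $y'=rx'$ on the other), the exponents combine to give the scaled extension estimate
$$
\Bigl\|\, \int_{S_{n-2}[r]} g(k')\,e^{\, i(x',k')}\,d\sigma^{(r)}_{n-2}(k')\Bigr\|_{L^q(\mathbb{R}^{n-1})} \leq C\,r^{1/q}\,\|g\|_{L^2(S_{n-2}[r];\,d\sigma^{(r)}_{n-2})},
$$
since $\tfrac{n-2}{2}-\tfrac{n-1}{q}=\tfrac{n-2}{2n}=\tfrac{1}{q}$.

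The second step is to slice. Polar coordinates $k'=r\omega'$ in $\mathbb{R}^{n-1}\setminus\{0\}$ yield $dk' = dr\,d\sigma^{(r)}_{n-2}$, so
$$
\breve u(x') = \int_{\varkappa-a}^{\varkappa+a} u_r(x')\,dr,\qquad u_r(x')\doteq \int_{S_{n-2}[r]}u(k')\,e^{\, i(x',k')}\,d\sigma^{(r)}_{n-2}(k').
$$
Apply Minkowski's integral inequality in $L^q(\mathbb{R}^{n-1})$ to the outer $r$-integral, then invoke the scaled extension bound above on each $u_r$. Since $a\leq \tfrac{3}{4}\varkappa$ forces $r\in[\tfrac{\varkappa}{4},\tfrac{7\varkappa}{4}]$, the factor $r^{1/q}$ is uniformly comparable to $\varkappa^{1/q}$, producing
$$
\|\breve u\|_{L^q(\mathbb{R}^{n-1})} \leq C\,\varkappa^{1/q} \int_{\varkappa-a}^{\varkappa+a}\|u\|_{L^2(S_{n-2}[r];\,d\sigma^{(r)}_{n-2})}\,dr.
$$

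The third step is a Cauchy--Schwarz in $r$: the integral is bounded by $(2a)^{1/2}$ times the square root of
$$
\int_{\varkappa-a}^{\varkappa+a}\!\!\int_{S_{n-2}[r]}|u(k')|^2\,d\sigma^{(r)}_{n-2}(k')\,dr = \int_{\mathcal{L}^{(n-1)}_a}|u(k')|^2\,dk' = \|u\|^2_{L^2(\mathcal{L}^{(n-1)}_a)},
$$
which yields the claimed inequality with $C_1 = \sqrt{2}\,C$.

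There is no serious obstacle: the argument is a textbook slicing reduction from the annulus to the Tomas--Stein extension theorem on spheres. The only points requiring mild care are (i) the exact accounting of powers of $r$ under the rescaling $S_{n-2}\to S_{n-2}[r]$, so that the scaling exponent comes out to be precisely $1/q$, and (ii) the use of the hypothesis $a\leq\tfrac{3}{4}\varkappa$ to guarantee that $r^{1/q}\asymp \varkappa^{1/q}$ uniformly across the annulus, which is essential for pulling the $\varkappa^{1/q}$ factor out of the integral.
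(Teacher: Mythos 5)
Your proof is correct. Both you and the paper rest the argument on the same three ingredients: the Tomas--Stein inequality \eqref{2.1}, a rescaling that produces the $\varkappa^{1/q}$ factor, and Cauchy--Schwarz in the radial direction to produce the $a^{1/2}$ factor. The difference is in the order in which duality is invoked. The paper stays on the restriction side throughout: it rescales \eqref{2.1} to the sphere $S_{n-2}[\varkappa]$, integrates the squared restriction bound across the annulus $\mathcal{L}^{(n-1)}_a$ (which directly yields the $L^p(\mathbb{R}^{n-1})\to L^2(\mathcal{L}^{(n-1)}_a)$ bound with the factor $a^{1/2}\varkappa^{1/q}$), and only then dualizes once against test functions $\mathcal{F}\in\mathcal{S}(\mathbb{R}^{n-1})$ to obtain the extension estimate for $\breve u$. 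You dualize at the outset to work with the extension operator, rescale that to spheres $S_{n-2}[r]$, and then must combine the slices with Minkowski's integral inequality before applying Cauchy--Schwarz in $r$. Your exponent bookkeeping ($\tfrac{n-2}{2}-\tfrac{n-1}{q}=\tfrac{1}{q}$, matching the paper's computation on the restriction side) and the use of $a\leq\tfrac34\varkappa$ to make $r^{1/q}$ uniformly comparable to $\varkappa^{1/q}$ are both correct. The paper's ordering is marginally more economical because integrating a squared quantity over the annulus is a single Fubini step, whereas working on the extension side forces the extra Minkowski step; but the two arguments are dual presentations of the same mechanism and yield the same constant up to inessential numerical factors.
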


\begin{proof} By \eqref{2.1}, for all $\varkappa >0$ and all ${\mathcal F}\in 
{\mathcal S}({\mathbb R}^{n-1})$, we get
\begin{equation} \label{2.2}
\biggl( \ \int\limits_{S_{n-2}[\varkappa ]}|\widehat {\mathcal F}|^2\, d \sigma
_{n-2}^{(\varkappa )}\biggr) ^{1/2}\leq C\, \varkappa ^{\, 1/q}\, \| {\mathcal
F}\| _{L^p({\mathbb R}^{n-1})}\, .
\end{equation}
Using \eqref{2.2} one immediately derives
$$
\biggl( \, \int\limits_{{\mathcal L}^{(n-1)}_a}|{\mathcal F}|^2\, d k^{\, 
\prime }\biggr) ^{1/2}=\biggl( \ \int\limits_{\varkappa -a}^{\varkappa +a}d 
\varkappa \, \int\limits_{S_{n-2}[\varkappa ]}|\widehat {\mathcal F}|^2\, d \sigma
_{n-2}^{(\varkappa )}\biggr) ^{1/2}\, \leq 
\, C_2\, a^{1/2}\varkappa ^{\, 1/q}\, \| {\mathcal F}\| _{L^p({\mathbb R}
^{n-1})}\, ,
$$
where $C_2=C_2(n)>0$. Therefore,
$$
\biggl| \ \int\limits_{{\mathbb R}^{n-1}}{\overline {\breve u(x^{\, \prime 
})}}\, {\mathcal F}(x^{\, \prime })\, d x^{\, \prime }\, \biggr| \, =\, (2\pi )
^{n-1}\, \biggl| \, \int\limits_{{\mathcal L}^{(n-1)}_a}{\overline u}\, {\widehat 
{\mathcal F}}\, d k^{\, \prime }\, \biggr| \, \leq
$$
$$
(2\pi )^{n-1}\, \| u\| _{L^2({\mathcal L}^{(n-1)}_a)}\, \| {\widehat 
{\mathcal F}}\| _{L^2({\mathcal L}^{(n-1)}_a)}\, \leq
\, (2\pi )^{n-1}\, C_2\, a^{1/2}\varkappa ^{\, 1/q}\, \| u\| _{L^2({\mathcal L}
^{(n-1)}_a)}\, \| {\mathcal F}\| _{L^p({\mathbb R}^{n-1})}
$$
and
$$
\| \breve u\| _{L^q({\mathbb R}^{n-1})}\, =
$$
$$
\sup\limits_{{\mathcal F}\, \in
\, {\mathcal S}({\mathbb R}^{n-1})\, :\, \| {\mathcal F}\| _{L^p({\mathbb R}^{n-1})} 
\, =\, 1}\ \, \biggl| \ \int\limits_{{\mathbb R}^{n-1}}{\overline {\breve
u(x^{\, \prime })}}\, {\mathcal F}(x^{\, \prime })\, d x^{\, \prime }\, \biggr| \,
\leq \, C_1\, a^{1/2}\varkappa ^{\, 1/q}\, \| u\| _{L^2({\mathcal L}^{(n-1)}_a)}\, ,
$$
where $C_1=(2\pi )^{n-1}\, C_2\, $.
\end{proof}

Let ${\mathfrak L}^{n-1}(e)=\{ x\in {\mathbb R}^n : (x,e)=0\} $. For vectors 
$x\in {\mathbb R}^n$ we write $x=(x_{\| },x_{\perp })$, where $x_{\| }=(x,e)
\in {\mathbb R}$, $x_{\perp }=x-(x,e)e\in {\mathfrak L}^{n-1}(e)$, $e=|\gamma |
^{-1}\gamma $. For functions ${\mathcal F}\in {\mathcal S}({\mathbb R}^n)$, let
us define the norms
$$
\| {\mathcal F}\| _{L^2_{\| }L^q_{\perp }({\mathbb R}^n)}=\biggl( \ \int\limits
_{{\mathbb R}}\| {\mathcal F}((x_{\| },.))\| ^2_{L^q({\mathfrak L}^{n-1}(e))}\,
d x_{\| } \biggr) ^{1/2}\, ,
$$ 
$$
\| {\mathcal F}\| _{L^{\infty }_{\| }L^q_{\perp }({\mathbb R}^n)}=
{\mathrm {ess}}\, \sup\limits_{\hskip -0.7cm x_{\| }\, \in \, {\mathbb R}}\ 
\| {\mathcal F}((x_{\| },.))\| _{L^q({\mathfrak L}^{n-1}(e))}\, .
$$
Denote
$$
\widetilde {\mathcal K}_a=\{ k\in {\mathbb R}^n : |\varkappa -|k_{\perp }||
\leq a\, ,\ \, |k_{\| }|\leq a\} \, .
$$
For functions $u\in L^2(\widetilde {\mathcal K}_a)$, we shall use the notation
$$
\widetilde u(x_{\| },k_{\perp })=\int\limits_{{\mathbb R}}u(k)\, e^{\, i \, 
k_{\| }x_{\| }}\, d k_{\| }\, ,\qquad x_{\| }\in {\mathbb R}\, ,\ \ k\in 
{\mathbb R}^n\, .
$$
Then
$$
\breve u(x)=\int\limits_{{\mathfrak L}^{n-1}(e)}\widetilde u(x_{\| },
k_{\perp })\, e^{\, i \, (k_{\perp },\, x_{\perp })}\, d k_{\perp }\, ,
\qquad x\in {\mathbb R}^n\, .
$$

\begin{lemma} \label{l2.2}
For all functions $u\in L^2(\widetilde {\mathcal K}_a)$, the estimate
$$
\| \breve u\| _{L^q({\mathbb R}^n)}\leq C_3\, a^{\, 1/2+1/n}\,
\varkappa ^{\, 1/2-1/n}\, \| u\| _{L^2(\widetilde {\mathcal K}_a)}
$$
is valid, where $C_3=C_3(n)>0$.
\end{lemma}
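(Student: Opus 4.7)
\textit{Proof proposal for Lemma~\ref{l2.2}.} The plan is to decouple the parallel and perpendicular Fourier variables and treat each factor with the tool best suited to it: Lemma~\ref{l2.1} in the $(n-1)$-dimensional perpendicular slice, and a one-dimensional Bernstein-type interpolation in the parallel slice. Throughout, let $q=2n/(n-2)$, so that $1/q=1/2-1/n$ and $1-1/q=1/2+1/n$, which is exactly the combination appearing in the target exponent.

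First I would fix $x_{\|}\in\mathbb{R}$ and consider $\breve u(x_{\|},\cdot)$ as a function on $\mathfrak L^{n-1}(e)$. By the very definition given just before Lemma~\ref{l2.2}, this function is the $(n-1)$-dimensional inverse Fourier transform of $\widetilde u(x_{\|},\cdot)$, which (since $u$ is supported in $\widetilde{\mathcal K}_a$) is itself supported in $\mathcal L^{(n-1)}_a$. Applying Lemma~\ref{l2.1} in dimension $n-1$ gives
$$
\|\breve u(x_{\|},\cdot)\|_{L^q(\mathfrak L^{n-1}(e))}\leq C_1\,a^{1/2}\varkappa^{1/q}\,\|\widetilde u(x_{\|},\cdot)\|_{L^2(\mathcal L^{(n-1)}_a)}.
$$
Raising to the $q$-th power and integrating in $x_{\|}$ expresses $\|\breve u\|_{L^q(\mathbb R^n)}$ as a mixed norm $\|\breve u\|_{L^q_{\|}L^q_{\perp}}$ and yields
$$
\|\breve u\|_{L^q(\mathbb R^n)}\leq C_1\,a^{1/2}\varkappa^{1/q}\,\bigl\|\,\|\widetilde u(x_{\|},\cdot)\|_{L^2_{\perp}}\bigr\|_{L^q_{\|}}.
$$

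Next I would swap the order of norms using Minkowski's integral inequality, which is applicable because $q\geq 2$: this gives
$$
\bigl\|\,\|\widetilde u(x_{\|},\cdot)\|_{L^2_{\perp}}\bigr\|_{L^q_{\|}}\leq \bigl\|\,\|\widetilde u(\cdot,k_{\perp})\|_{L^q_{\|}}\bigr\|_{L^2_{\perp}}.
$$
Now for each fixed $k_{\perp}\in\mathcal L^{(n-1)}_a$, the function $\widetilde u(\cdot,k_{\perp})$ is the one-dimensional inverse Fourier transform of $u(\cdot,k_{\perp})$, which is supported in $[-a,a]$. Plancherel gives $\|\widetilde u(\cdot,k_{\perp})\|_{L^2_{\|}}=(2\pi)^{1/2}\|u(\cdot,k_{\perp})\|_{L^2_{\|}}$, while Cauchy--Schwarz on the compactly supported integral gives $\|\widetilde u(\cdot,k_{\perp})\|_{L^{\infty}_{\|}}\leq (2a)^{1/2}\|u(\cdot,k_{\perp})\|_{L^2_{\|}}$. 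Interpolating between these two endpoints (convexity of $L^p$ norms, or a one-line Riesz--Thorin) yields
$$
\|\widetilde u(\cdot,k_{\perp})\|_{L^q_{\|}}\leq (2\pi)^{1/q}\,(2a)^{1/2-1/q}\,\|u(\cdot,k_{\perp})\|_{L^2_{\|}}.
$$
Squaring and integrating in $k_{\perp}$ over $\mathcal L^{(n-1)}_a$ produces $\|u\|_{L^2(\widetilde{\mathcal K}_a)}$ on the right.

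Combining the three estimates, the power of $a$ becomes $a^{1/2}\cdot a^{1/2-1/q}=a^{1-1/q}=a^{1/2+1/n}$, and the power of $\varkappa$ is $\varkappa^{1/q}=\varkappa^{1/2-1/n}$, exactly matching the statement, with $C_3$ an absolute constant depending only on $n$ (times $C_1(n)$). There is no real obstacle beyond the bookkeeping of exponents and the verification that Minkowski's inequality applies in the correct direction; the one step that requires care is ensuring that the endpoint $L^\infty_{\|}$-bound used in the 1D interpolation picks up $a^{1/2}$ rather than $a$, as this is what delivers the sharp exponent $1/2+1/n$ on $a$.
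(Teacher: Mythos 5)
Your proposal is correct and arrives at the same conclusion as the paper, using the same three ingredients — Lemma~\ref{l2.1} applied slice-wise in $x_{\|}$, 1D Plancherel, and the 1D Cauchy--Schwarz bound $|\widetilde u(x_{\|},k_\perp)|\leq(2a)^{1/2}\|u(\cdot,k_\perp)\|_{L^2_\|}$, combined via interpolation between the $L^2$ and $L^\infty$ endpoints — but it sequences them differently. The paper never transposes mixed norms: after Lemma~\ref{l2.1} it bounds the two quantities $\|\breve u\|_{L^2_\|L^q_\perp}$ and $\|\breve u\|_{L^\infty_\|L^q_\perp}$ outright (the first by Plancherel$+$Fubini, the second by the uniform-in-$x_\|$ Cauchy--Schwarz bound), and then interpolates once in the outer $x_\|$ variable using $\|f\|_{L^q(\mathbb R)}\leq\|f\|_{L^2}^{2/q}\|f\|_{L^\infty}^{1-2/q}$ applied to $f(x_\|)=\|\breve u(x_\|,\cdot)\|_{L^q_\perp}$, whose $L^q_\|$ norm equals $\|\breve u\|_{L^q(\mathbb R^n)}$. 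You instead take the $L^q_\|L^q_\perp$ norm directly, land on $\|\widetilde u\|_{L^q_\|L^2_\perp}$, invoke Minkowski's integral inequality (which requires, and has, $q\geq 2$) to swap to $L^2_\perp L^q_\|$, and only then interpolate pointwise in $k_\perp$. Both routes are sound and give the same powers $a^{1-1/q}=a^{1/2+1/n}$ and $\varkappa^{1/q}=\varkappa^{1/2-1/n}$; the paper's version buys you a slightly shorter argument that avoids Minkowski altogether, while yours is perhaps more transparent in isolating the 1D interpolation as acting on the untransformed $u(\cdot,k_\perp)$. The only caveat to note: the explicit numerical constants (powers of $2\pi$) differ between your calculation and the paper's displayed $C_3$, but since $C_3$ is only claimed to depend on $n$, this is immaterial.
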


\begin{proof}
From Lemma \ref{l2.1} it follows  that
$$
\| \breve u((x_{\| },.))\| _{L^q({\mathfrak L}^{n-1}(e))}\, \leq \, 
C_1^{\, \prime }\, \| \widetilde u(x_{\| },.)\| _{L^2({\mathfrak L}^{n-1}(e))}
$$
for all $x_{\| }\in {\mathbb R}$, where $C_1^{\, \prime }=C_1\, a^{1/2}\varkappa 
^{\, 1/q}\, $. Therefore the following estimates hold:
\begin{equation} \label{2.3}
\| \breve u \| _{L^2_{\| }L^q_{\perp }({\mathbb R}^n)}=\biggl( \ \, 
\int\limits_{{\mathbb R}}\| \breve u ((x_{\| },.))\| ^2_{L^q({\mathfrak L}
^{n-1}(e))}\, d x_{\| } \biggr) ^{1/2}\, \leq
\end{equation}
$$
C_1^{\, \prime }\, \biggl( \ \int\limits_{{\mathbb R}}\, \| \widetilde u 
(x_{\| },.)\| ^2_{L^2({\mathfrak L}^{n-1}(e))}\, d x_{\| } \biggr) ^{1/2}=
\ C_1^{\, \prime }\, \biggl( \ \int\limits_{{\mathfrak L}^{n-1}(e)}\, 
\int\limits_{{\mathbb R}}\, |\widetilde u(x_{\| },k_{\perp })|^2\, d k_{\perp }\,
dx_{\| } \biggr) ^{1/2}\, =
$$
$$
\frac {C_1^{\, \prime }}{\sqrt {2\pi }}\ \biggl( \ \int\limits_{{\mathfrak L}
^{n-1}(e)}\, \int\limits_{{\mathbb R}} |u(k)|^2\, d k_{\perp }\,
dk_{\| } \biggr) ^{1/2}\, =\, \frac {C_1^{\, \prime }}{\sqrt {2\pi }}\ 
\| u\| _{L^2(\widetilde {\mathcal K}_a)}\, ,
$$
\begin{equation} \label{2.4}
\| \breve u \| _{L^{\infty }_{\| }L^q_{\perp }({\mathbb R}^n)}\, =
\end{equation}
$$
{\mathrm {ess}}\, \sup\limits_{\hskip -0.7cm x_{\| }\, \in \, {\mathbb R}}\ 
\| \breve u((x_{\| },.))\| _{L^q({\mathfrak L}^{n-1}(e))}\, \leq
\, C_1^{\, \prime }\ {\mathrm {ess}}\, \sup\limits_{\hskip -0.7cm x_{\| }\, \in 
\, {\mathbb R}}\ \| \widetilde u(x_{\| },.)\| _{L^2({\mathfrak L}^{n-1}(e))}\, =
$$
$$
C_1^{\, \prime }\ {\mathrm {ess}}\, \sup\limits_{\hskip -0.7cm x_{\| }\, 
\in \, {\mathbb R}}\ \biggl( \ \, \int\limits_{{\mathfrak L}^{n-1}(e)}\ \biggl| \
\int\limits_{-a}^au(k)\, e^{\, i \, k_{\| }x_{\| }}\, d k_{\| }\,
\biggr| ^2\, d k_{\perp }\biggr) ^{1/2}\, \leq
$$
$$
C_1^{\, \prime }\, (2a)^{1/2}\ \biggl( \ \, \int\limits_{{\mathfrak L}
^{n-1}(e)}\, \biggl( \ \, \int\limits_{{\mathbb R}} |u(k)|^2\, d k_{\| }\biggr) \,
dk_{\perp } \biggr) ^{1/2}=\ C_1^{\, \prime }\, (2a)^{1/2}\ \| u\| _{L^2
(\widetilde {\mathcal K}_a)}\, .
$$
Since the inequality
$$
\| f\| _{L^q({\mathbb R})}\leq \| f\| _{L^2({\mathbb R})}^{\, 2/q}\,
\| f\| _{L^{\infty }({\mathbb R})}^{\, 1-2/q}
$$
is valid for all functions $f\in L^2({\mathbb R})\cap L^{\infty }({\mathbb R})$,
from \eqref{2.3} and \eqref{2.4} we obtain the estimate
$$
\| \breve u \| _{L^q({\mathbb R}^n)}\, \leq \, \bigl( \| \breve u \| 
_{L^2_{\| }L^q_{\perp }({\mathbb R}^n)}\bigr) ^{2/q}\, \bigl( \| \breve
u \| _{L^{\infty }_{\| }L^q_{\perp }({\mathbb R}^n)}\bigr) ^{1-2/q}\, \leq
\, C_3\, a^{\, 1/2+1/n}\, \varkappa ^{\, 1/2-1/n}\, 
\| u\| _{L^2(\widetilde {\mathcal K}_a)}\, ,
$$
where $C_3=C_3(n)=C_1\, 2^{\, 1/n}\, (2\pi )^{-1/2+1/n}\, $.
\end{proof}

For a fixed vector $k\in {\mathbb R}^n$ (and for $0<a\leq \frac 34\, \varkappa $),
define the sets
$$
{\mathcal K}_a=\{ N\in \Lambda ^* : k+2\pi N\in \widetilde {\mathcal K}_a\} 
\, .
$$
Let ${\mathrm {diam}}\, K^*$ be diameter of the fundamental domain $K^*$. For any
set ${\mathcal C}\subseteq \Lambda ^*$, let us denote ${\mathcal H}({\mathcal C})=
\{ \phi \in L^2(K) : \phi _N=0$ for $N\in \Lambda ^*\backslash \, {\mathcal 
C}\} $, ${\mathcal H}(\emptyset )=\{ 0\} $, ${\mathcal H}(\Lambda ^*)=L^2(K)$. 

\begin{lemma} \label{l2.3}
Let $\varkappa \geq 4\pi \, {\mathrm {diam}}\, K^*$ and let $\pi \, {\mathrm {diam}}\, 
K^*\leq a\leq \varkappa /2$. Then for any function ${\mathcal F}\in 
{\mathcal H}({\mathcal K}_a)$ the inequality
\begin{equation} \label{2.5}
\| {\mathcal F}\| _{L^q(K)}\leq C_4\, a^{\, 1/2+1/n}\,
\varkappa ^{\, 1/2-1/n}\, \| {\mathcal F}\| _{L^2(K)}
\end{equation}
holds, where $C_4=C_4(n)>0$.
\end{lemma}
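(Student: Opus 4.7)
The strategy is to lift the estimate from the torus $K$ to $\mathbb{R}^n$ and apply Lemma~\ref{l2.2}. Given $\mathcal{F}\in\mathcal{H}(\mathcal{K}_a)$, I would construct a function $u\in L^2(\mathbb{R}^n)$, supported in a slight thickening $\widetilde{\mathcal{K}}_{a+r}$ of $\widetilde{\mathcal{K}}_a$ (for a fixed $r=r(\Lambda)>0$), whose inverse Fourier transform $\breve{u}$ reproduces $\mathcal{F}$ on $K$ up to a bounded, nonvanishing factor.

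Concretely, fix a small radius $r=r(\Lambda)>0$ satisfying (a) the balls $B_r(2\pi N)$, $N\in\Lambda^*$, are pairwise disjoint; (b) $r\leq \pi\,\mathrm{diam}\,K^*$; and (c) the normalized indicator $\chi=v(B_r)^{-1}\mathbf{1}_{B_r}$ has inverse Fourier transform $\tilde{\chi}(x)=\int\chi(\xi)\,e^{i(\xi,x)}\,d\xi$ with $|\tilde{\chi}(x)|\geq 1/2$ on $K$; the last point is arranged by taking $r\max_{x\in K}|x|\leq \pi/3$, so that $\tilde{\chi}(x)\geq \cos(\pi/3)=1/2$. Then define
$u(\xi)=\sum_{N\in\mathcal{K}_a}\mathcal{F}_N\,\chi(\xi-k-2\pi N)$.

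A direct change of variables gives $\breve{u}(x)=\tilde{\chi}(x)\,e^{i(k,x)}\,\mathcal{F}(x)$, so $\|\mathcal{F}\|_{L^q(K)}\leq 2\|\breve{u}\|_{L^q(\mathbb{R}^n)}$ by (c). Disjointness from (a) together with Parseval yield $\|u\|_{L^2(\mathbb{R}^n)}^2=\|\chi\|_{L^2}^2\sum_{N\in\mathcal{K}_a}|\mathcal{F}_N|^2=v(K)^{-1}\|\chi\|_{L^2}^2\|\mathcal{F}\|_{L^2(K)}^2$. Each support $k+2\pi N+B_r$ lies within distance $r$ of $\widetilde{\mathcal{K}}_a$, so $\mathrm{supp}\,u\subset\widetilde{\mathcal{K}}_{a+r}$; the hypotheses $a\leq\varkappa/2$, $r\leq\pi\,\mathrm{diam}\,K^*$ and $\varkappa\geq 4\pi\,\mathrm{diam}\,K^*$ combine to give $a+r\leq 3\varkappa/4$, so Lemma~\ref{l2.2} applies. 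Since $a\geq\pi\,\mathrm{diam}\,K^*\geq r$, also $(a+r)^{1/2+1/n}\leq 2^{1/2+1/n}a^{1/2+1/n}$. Chaining these four inequalities produces \eqref{2.5}.

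The main obstacle is the joint choice of $r$ in (a)--(c): disjointness bounds $r$ above by half the minimum spacing of $2\pi\Lambda^*$, while the pointwise lower bound on $\tilde{\chi}$ requires $r$ small compared with $1/\mathrm{diam}\,K$. Both bounds are strictly positive and lattice-dependent, so a suitable $r=r(\Lambda)>0$ exists; everything else is routine bookkeeping around the support containment, the Parseval identity on $K$, and the scaling $a+r\asymp a$.
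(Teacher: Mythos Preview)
Your approach is essentially the paper's: smear the discrete Fourier data $\{\mathcal F_N\delta_{k+2\pi N}\}$ by a small bump so that Lemma~\ref{l2.2} applies, use disjointness of the translated bumps to compute the $L^2$-norm, and control $\mathcal F$ on $K$ by the resulting $\breve u$ via a pointwise lower bound on the bump's physical-space counterpart. The paper does this by multiplying $\mathcal F$ in physical space by $\Omega(x)=\omega(2\pi\widehat{\mathcal L}x)$ with $\omega\in\Xi$ (Schwartz, $\widehat\omega$ supported in the unit cube), which is the same operation seen from the other side.

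The one point where you fall short of the lemma \emph{as stated} is the constant. Your radius $r$ is constrained by both the minimum spacing of $2\pi\Lambda^*$ and by $\mathrm{diam}\,K$, so $r=r(\Lambda)$, and then $\|\chi\|_{L^2}^2=v(B_r)^{-1}$ enters the final bound; you obtain $C_4=C_4(n,\Lambda)$ rather than $C_4=C_4(n)$. The paper avoids this by adapting the bump to the lattice through the linear map $\widehat{\mathcal L}$ (so that $\widehat\Omega(\tilde k)\widehat\Omega(\tilde k-2\pi N)\equiv 0$ for $N\ne 0$ automatically), which makes $\int\widehat\Omega^2\,d\tilde k=v(K)/(2\pi)^n$ and cancels the $v(K)^{-1}$ from Parseval; the pointwise recovery $\|\mathcal F\|_q\le C_5\sup_{\omega\in\Xi}\|\Omega\mathcal F\|_q$ is then handled by a supremum over a normalized family, giving $C_5=C_5(n)$.

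For the downstream application (Theorems~\ref{th1.2} and \ref{th1.1}) your $\Lambda$-dependent constant would in fact suffice, since the final constants are allowed to depend on $\Lambda$; but to match the lemma verbatim you should either adapt your bump to the lattice as the paper does, or weaken the stated dependence of $C_4$.
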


\begin{proof}
Denote by $\widehat {\mathcal L}$ the linear transformation on the space ${\mathbb 
R}^n$ such that $\widehat {\mathcal L}E_j={\mathcal E}_j\, $, $j=1,\dots ,n$ (where 
$\{ {\mathcal E}_j\} $ is the fixed orthogonal basis in ${\mathbb R}^n$). Then also 
$(\widehat {\mathcal L}^{-1})^*E_l^*={\mathcal E}_l\, $, $l=1,\dots ,n$, and $|\, 
{\mathrm {det}}\, \widehat {\mathcal L}\, |=v^{-1}(K)=v(K^*)$ (here $\{ E_j\} $ and 
$\{ E_j^*\} $ are the bases in the lattices $\Lambda $ and $\Lambda ^*$, 
respectively, $(E_j^*,E_l)=\delta _{jl}$). Let $\Xi $ be the set of functions
$\omega \in {\mathcal S}({\mathbb R}^n)$ such that $\widehat \omega \in C_0
^{\infty }({\mathbb R}^n)$, $\widehat \omega (\widetilde k)\geq 0$ for all
$\widetilde k\in {\mathbb R}^n$, $\widehat \omega (\widetilde k)=0$ if $|\widetilde 
k_j|\geq \frac 12$ for some index $j\in \{ 1,\dots ,n\} $, and
$$
\int\limits_{{\mathbb R}^n}\widehat \omega ^2(\widetilde k)\, d \widetilde k\, =
\, (2\pi )^n \int\limits_{{\mathbb R}^n}|\omega (x)|^2\, d x\, =\, 1\, .
$$
For functions $\omega \in \Xi $, we define the functions $\Omega (x)=\omega (2\pi 
\widehat {\mathcal L}x)$, $x\in {\mathbb R}^n$. One has
$$
\widehat \Omega (\widetilde k)\, =\ \frac {v(K)}{(2\pi )^n}\ \, \widehat \omega \, 
\bigl( \frac 1{2\pi }\, (\widehat {\mathcal L}^{-1})^*\widetilde k\bigr) \, ,\qquad
\widetilde k\in {\mathbb R}^n\, .
$$
Consequently,
\begin{equation} \label{2.6}
\int\limits_{{\mathbb R}^n}\, \widehat \Omega ^2(\widetilde k)\, d \widetilde k\ =
\ \, \frac {v(K)}{(2\pi )^n}
\end{equation}
and 
\begin{equation} \label{2.7}
\widehat \Omega (\widetilde k)\, \widehat \Omega (\widetilde k-2\pi N)\equiv 0\, ,
\qquad \widetilde k\in {\mathbb R}^n\, , 
\end{equation}
for all $N\in \Lambda ^*\backslash \{ 0\} $. We write $b=\pi \, {\mathrm {diam}}\, 
K^*$. The estimate $a+b\leq \frac 34\, \varkappa $ holds. Since
$$
\widehat {\Omega {\mathcal F}}(\widetilde k)=\sum\limits_{N\, \in \, {\mathcal
K}_a}{\mathcal F}_N\, \widehat \Omega (\widetilde k-2\pi N)\, ,\qquad \widetilde
k\in {\mathbb R}^n\, ,
$$
the equality $\widehat {\Omega {\mathcal F}}(\widetilde k)=0$ is fulfilled in the
case where $\widetilde k-k\in {\mathbb R}^n\, \backslash \, \widetilde {\mathcal K}
_{a+b}\, $. Hence, by Lemma \ref{l2.2},
\begin{equation} \label{2.8}
\| \Omega {\mathcal F}\| _{L^q({\mathbb R}^n)}\leq C_3\ a^{\, 1/2+1/n}\,
\varkappa ^{\, 1/2-1/n}\ \| \widehat {\Omega {\mathcal F}}\| _{L^2(k+
\widetilde {\mathcal K}_{a+b})}\, .
\end{equation}
Furthermore (see \eqref{2.6}, \eqref{2.7}),
\begin{equation} \label{2.9}
\| \widehat {\Omega {\mathcal F}}\| _{L^2(k+\widetilde {\mathcal K}_{a+b})}\, =\,
\int\limits_{{\mathbb R}^n}\ \biggl| \, \sum\limits_{N\, \in \, {\mathcal
K}_a}{\mathcal F}_N\, \widehat \Omega (\widetilde k-2\pi N)\, \biggr| ^2\,
d \widetilde k\, =
\end{equation}
$$
\biggl( \ \, \int\limits_{{\mathbb R}^n}\widehat \Omega ^2(\widetilde k)\, 
d \widetilde k \, \biggr) \, \sum\limits_{N\, \in \, {\mathcal K}_a}|{\mathcal 
F}_N|^2\, =\, (2\pi )^{-n}\, \| {\mathcal F}\| _{L^2(K)}\, .
$$
On the other hand, 
\begin{equation} \label{2.10}
\| \Omega {\mathcal F}\| _{L^q(K)}\leq \| \Omega {\mathcal F}\| 
_{L^q({\mathbb R}^n)} 
\end{equation}
and since one can pick an arbitrary function $\omega \in \Xi $, it is not hard to
obtain the estimate
\begin{equation} \label{2.11}
\| {\mathcal F}\| _q\doteq \| {\mathcal F}\| _{L^q(K)}\, \leq \, C_5\, 
\sup\limits_{\omega \, \in \, \Xi }\, \| \Omega {\mathcal F}\| _{L^q(K)}\, ,
\end{equation}
where $C_5=C_5(n)>0$. Finally, estimate \eqref{2.5} with the constant $C_4=C_3\, 
C_5\, $ follows from \eqref{2.8}, \eqref{2.9}, \eqref{2.10}, and \eqref{2.11}.
\end{proof}

\begin{lemma} \label{l2.4}
Let $\varkappa \geq 4\pi \, {\mathrm {diam}}\, K^*$ and let $\pi \, {\mathrm {diam}}\, 
K^*\leq a\leq \varkappa /2$. Then for any $\varepsilon >0$ there is a
constant $C(n,\varepsilon )>0$ such that for all functions ${\mathcal W}\in L^n
_w(K)$ and $\phi \in {\mathcal H}({\mathcal K}_a)$ the inequality
\begin{equation} \label{2.12}
\| {\mathcal W}\phi \| \, \leq \, C(n,\varepsilon )\ a^{\, 1/2+1/n}\,
\varkappa ^{\, 1/2-1/n}\ \biggl( \frac {\varkappa }a\biggr) ^{\varepsilon }
\, \| {\mathcal W}\| _{n,\, w}\, \| \phi \|
\end{equation}
holds.
\end{lemma}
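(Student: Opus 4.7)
The plan is to combine Lemma~\ref{l2.3} with a Bernstein-type $L^{\infty}$ bound for $\phi$ and a threshold splitting of $\mathcal{W}$. Set $q=2n/(n-2)$ and $M_q=\widetilde C\,a^{1/2+1/n}\varkappa^{1/2-1/n}$, so that Lemma~\ref{l2.3} reads $\|\phi\|_q\le M_q\|\phi\|_2$ for $\phi\in\mathcal{H}(\mathcal{K}_a)$. First I would derive the Bernstein-type bound $\|\phi\|_{\infty}\le M_{\infty}\|\phi\|_2$ with $M_{\infty}\le C\,a\varkappa^{(n-2)/2}$, obtained by Cauchy--Schwarz on the Fourier coefficients of $\phi$ and the elementary volume/lattice-point estimate $|\mathcal{K}_a|\lesssim a^2\varkappa^{n-2}\,v(K)$ coming from the geometry of the annular slab $\widetilde{\mathcal{K}}_a$. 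Log-convexity of $L^p$ norms then yields, for every $r\ge q$,
\[
\|\phi\|_r\le\|\phi\|_q^{q/r}\|\phi\|_{\infty}^{1-q/r}\le M_q^{q/r}M_{\infty}^{1-q/r}\|\phi\|_2.
\]

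Fix a small parameter $\delta>0$ and put $s=n/(1+\delta)\in(2,n)$, $r=2s/(s-2)>q$, so that $1/s+1/r=1/2$. Split $\mathcal{W}=\mathcal{W}\,\mathbf{1}_{|\mathcal{W}|\le T}+\mathcal{W}\,\mathbf{1}_{|\mathcal{W}|>T}$ at a threshold $T>0$ to be chosen. The bounded piece gives $\|\mathcal{W}\,\mathbf{1}_{|\mathcal{W}|\le T}\,\phi\|_2\le T\|\phi\|_2$. For the tail, a layer-cake computation using $v(\{|\mathcal{W}|>t\})\le\|\mathcal{W}\|_{n,w}^n/t^n$ gives $\|\mathcal{W}\,\mathbf{1}_{|\mathcal{W}|>T}\|_s\le C(n,\delta)\,\|\mathcal{W}\|_{n,w}^{n/s}T^{1-n/s}$, and H\"older combined with the $L^r$ bound above produces
\[
\|\mathcal{W}\,\mathbf{1}_{|\mathcal{W}|>T}\,\phi\|_2\le C(n,\delta)\,\|\mathcal{W}\|_{n,w}^{n/s}T^{1-n/s}\,M_q^{q/r}M_{\infty}^{1-q/r}\,\|\phi\|_2.
\]
Minimizing the sum of the two contributions in $T>0$ consolidates these into
\[
\|\mathcal{W}\phi\|_2\le C'(n,\delta)\,\|\mathcal{W}\|_{n,w}\,\bigl(M_q^{q/r}M_{\infty}^{1-q/r}\bigr)^{s/n}\|\phi\|_2.
\]

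The remaining step is an exponent computation. Using $s/n=1/(1+\delta)$ and $1-q/r=2\delta/(n-2)$ together with the identity $M_{\infty}/M_q\propto(a\varkappa^{n-1})^{(n-2)/(2n)}$ that is immediate from the definitions of $M_q$ and $M_{\infty}$, one expands the powers of $a$ and $\varkappa$ in $\bigl(M_q^{q/r}M_{\infty}^{1-q/r}\bigr)^{s/n}=M_q^{s/n}(M_{\infty}/M_q)^{(1-q/r)s/n}$ and finds, after algebraic simplification, the clean scaling
\[
\bigl(M_q^{q/r}M_{\infty}^{1-q/r}\bigr)^{s/n}=C(n,\delta)\,M_q\,(\varkappa/a)^{\delta/(2(1+\delta))}.
\]
Setting $\delta$ so that $\delta/(2(1+\delta))=\varepsilon$ (equivalently $\delta=2\varepsilon/(1-2\varepsilon)$ for $\varepsilon<1/2$) then gives \eqref{2.12}, with a constant $C(n,\varepsilon)$ that grows polynomially in $1/\varepsilon$ as $\varepsilon\to 0^+$.

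The main obstacle is precisely this final exponent arithmetic. A direct application of H\"older without the threshold splitting of $\mathcal{W}$ (i.e.\ $\|\mathcal{W}\phi\|_2\le\|\mathcal{W}\|_s\|\phi\|_r$ on all of $K$) would yield only the weaker loss factor $(a\varkappa^{n-1})^{\delta/n}$, which is not of the form $(\varkappa/a)^{\varepsilon}$ in general---for instance when $a\sim\varkappa$, the quantity $(a\varkappa^{n-1})^{\delta/n}\sim\varkappa^{\delta}$ blows up with $\varkappa$ while $(\varkappa/a)^{\varepsilon}$ stays bounded. The cancellation producing the clean scaling $(\varkappa/a)^{\delta/(2(1+\delta))}$ only emerges when $T$ is balanced optimally against the Bernstein-based $L^r$ control of $\phi$ and the identity $M_{\infty}/M_q\propto(a\varkappa^{n-1})^{(n-2)/(2n)}$ is exploited; tracking all powers through the minimization is the main place where care is needed.
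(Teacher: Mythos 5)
Your argument is correct, and it rests on exactly the same three endpoint estimates as the paper's proof: the $L^q$ bound of Lemma~\ref{l2.3}, the Bernstein-type bound $\| \phi \| _{\infty }\leq C\, a\, \varkappa ^{(n-2)/2}\, \| \phi \| $ obtained from Cauchy--Schwarz on the Fourier coefficients together with the lattice-point count $|{\mathcal K}_a|\lesssim a^2\varkappa ^{\, n-2}\, v(K)$ (this is the paper's estimate \eqref{2.13}), and the trivial $L^\infty \to L^2$ bound \eqref{2.14}. Where you differ is in how the interpolation is organized. The paper first fixes auxiliary exponents $n_1\in (2,n)$ and $n_2\in (n,\infty )$ and runs two rounds of ``large plus small'' splitting: one to produce intermediate estimates \eqref{2.16}, \eqref{2.17} in $L^{n_1}$ and $L^{n_2}$, and a second to interpolate those into the weak-$L^n$ bound (the choices $\varepsilon _1=8\varepsilon /(n-2)$ and $\varepsilon _2=4\varepsilon $ are calibrated so that the interpolation parameter lands at $\theta =1/2$). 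You instead perform a single threshold split of $\mathcal W$ at level $T$, use H\"older with the conjugate pair $(s,r)$, control $\| \phi \| _r$ by log-convexity of $L^p$ norms between $L^q$ and $L^\infty $, and optimize over $T$. Both routes are implementations of the same Marcinkiewicz-type interpolation; yours bypasses the intermediate $L^{n_1}$, $L^{n_2}$ stages and reaches the $(\varkappa /a)^{\varepsilon }$ loss in one optimization, which is arguably cleaner. I checked your exponent arithmetic: with $s=n/(1+\delta )$ one indeed has $1-q/r=2\delta /(n-2)$ and $s/n=1/(1+\delta )$, and the powers of $a$ and $\varkappa $ in $\bigl( M_q^{\, q/r}M_\infty ^{\, 1-q/r}\bigr) ^{s/n}$ come out to $a^{\, 1/2+1/n-\varepsilon }\, \varkappa ^{\, 1/2-1/n+\varepsilon }$ with $\varepsilon =\delta /(2(1+\delta ))$, so the final estimate is exactly \eqref{2.12}.
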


\begin{proof}
We may assume that $\varepsilon <\min \ \{ \frac {n-2}8\, ,\, \frac 14\, \} $.
Define the numbers $\varepsilon _1=8\varepsilon /(n-2)\in (0,1)$, $\varepsilon 
_2=4\varepsilon \in (0,1)$, and let $\phi \in {\mathcal H}({\mathcal K}_a)$. For
functions ${\mathcal V}_1\in L^2(K)$ and ${\mathcal V}_2\in L^{\infty }(K)$, the
following estimates are valid:
\begin{equation} \label{2.13}
\| {\mathcal V}_1\phi \| \, \leq \, \| {\mathcal V}_1\| _2\, \| \phi \| _{\infty }
\, \leq \, \| {\mathcal V}_1\| _2\, \biggl( \ \sum\limits_{N\, \in \, {\mathcal 
K}_a}|\phi _N|\biggr) \, \leq
\end{equation}
$$
\| {\mathcal V}_1\| _2\ \biggl( \ \sum\limits_{N\, \in \, {\mathcal 
K}_a}1\, \biggr) ^{1/2}\, \biggl( \ \sum\limits_{N\, \in \, {\mathcal K}_a}|\phi _N|^2
\biggr) ^{1/2}\, \leq \, C_6\, a \varkappa ^{\, (n-2)/2}\,  \| {\mathcal V}_1\| 
_2\, \| \phi \| \, ,
$$
where $C_6=C_6(n)>0$, and
\begin{equation} \label{2.14}
\| {\mathcal V}_2\phi \| \, \leq \, \| {\mathcal V}_2\| _{\infty }\, \| \phi \| 
\end{equation}
(here $\| .\| = \| .\| _2\doteq \| .\| _{L^2(K)}$). On the other hand, using
Lemma \ref{l2.3}, for functions ${\mathcal V}\in L^n(K)$, we derive
\begin{equation} \label{2.15}
\| {\mathcal V}\phi \| \, \leq \, \| {\mathcal V}\| _n\, \| \phi \| _q\, \leq
\, C_4\, a^{\, 1/2+1/n}\, \varkappa ^{\, 1/2-1/n}\, \| {\mathcal V}\| _n\, \| \phi 
\| \, .
\end{equation}
Now, pick the numbers $n_1\in (2,n)$ and $n_2\in (n,+\infty )$ such that
$$
\frac 1{n_1}=\frac {\varepsilon _1}2+\frac {1-\varepsilon _1}n\, ,\ \ \ \ \frac
1{n_2}=\frac {1-\varepsilon _2}n\, .
$$
For functions ${\mathcal W}_j\in L^{n_j}(K)$, $j=1,2$, from estimates \eqref{2.13}
and \eqref{2.15} for $j=1$, and estimates \eqref{2.14} and \eqref{2.15} for $j=2$,
with the help of interpolation (expressing functions ${\mathcal W}_j$ as sums of
`large' and `small' ones (see, e.g., \cite{Stein,BL})), we obtain
\begin{equation} \label{2.16}
\| {\mathcal W}_1\phi \| \, \leq \, 2\, \bigl( C_6\, a \varkappa ^{\,  
(n-2)/2}\, \bigr) ^{\varepsilon _1}\, \bigl( C_4\, a^{\, 1/2+1/n}\, 
\varkappa ^{\, 1/2-1/n}\, \bigr) ^{1-\varepsilon _1}\, \| {\mathcal W}_1
\| _{n_1}\, \| \phi \| \, ,
\end{equation}
\begin{equation} \label{2.17}
\| {\mathcal W}_2\phi \| \, \leq \, 2\, \bigl( C_4\, a^{\, 1/2+1/n}\, 
\varkappa ^{\, 1/2-1/n}\, \bigr) ^{1-\varepsilon _2}\, \| {\mathcal W}_2
\| _{n_2}\, \| \phi \| \, ,
\end{equation} 
respectively. Again applying the interpolation (expressing functions ${\mathcal W}
\in L^n_w(K)$ as sums of `large' functions ${\mathcal W}_1\in L^{n_1}(K)$ and `small'
functions ${\mathcal W}_2\in L^{n_2}(K)$ (see \cite {Stein,BL} and also \cite{RS2})), 
from \eqref{2.16} and \eqref{2.17}, we derive estimate \eqref{2.12} with
some constant $C(n,\varepsilon )>0$.
\end{proof}

Define the operators
$$
\widehat G_{\pm}\phi =\widehat G_{\pm}(k+i \varkappa e)\phi =\sum\limits_{N\, 
\in \, \Lambda ^*}G_N^{\pm}(k+i \varkappa e)\, \phi _N\, e ^{\, 2\pi i \, 
(N,x)}\, ,
$$
$$
\phi \in D(\widehat G_{\pm})=\widetilde H^1(K)\subset L^2(K)\, .
$$
We have $\widehat L=\widehat G_+\widehat G_-\, $. Since the vector $k\in {\mathbb
R}^n$ is assumed to satisfy the condition $|(k,\gamma )|=\pi $, we get $G_N^+(k+
i \varkappa e)\geq G_N^-(k+i \varkappa e)\geq \pi {|\gamma |}^{-1}$ for all 
$\varkappa \geq 0$ and all $N\in \Lambda ^*$. Hence for all $\zeta \in {\mathbb C}$, 
we can also define the operators
$$
\widehat G_{\pm}^{\, \zeta }\phi =\widehat G_{\pm}^{\, \zeta }(k+i \varkappa e)\phi 
=\sum\limits_{N\, \in \, \Lambda ^*}(G_N^{\pm}(k+i \varkappa e))^{\, \zeta }\, \phi 
_N\, e^{\, 2\pi i \, (N,x)}\, ,
$$
$$
\phi \in D(\widehat G_{\pm}^{\, \zeta })=
\left\{
\begin{array}{ll}
\widetilde H^{\, {\mathrm {Re}}\ \zeta }\, (K) & \mathrm {if}\ \ {\mathrm {Re}}
\ \zeta >0\, , \\
L^2(K) & \mathrm {if}\ \ {\mathrm {Re}}\ \zeta \leq 0\, .
\end{array}
\right. 
$$

Given $\varkappa \geq \max \ \{ 8,4\pi \, {\mathrm {diam}}\, K^*\} $, we choose the
numbers $h\in [2,4)$ and $l\in {\mathbb N}\backslash \{ 1\} $ such that $h^{\, l}=
\varkappa /2$. Let $m\in {\mathbb N}$ be the smallest number for which 
$h^{\, m}\geq \pi \, {\mathrm {diam}}\, K^*$ (then $m<l$). Denote
$$
{\mathcal K}(m)=\{ N\in \Lambda ^*:G_N^-(k+i \varkappa e)\leq h^{\, m}\} \, ,
$$
$$
{\mathcal K}(j)=\{ N\in \Lambda ^*:h^{\, j-1}<G_N^-(k+i \varkappa e)\leq h^{\, j}\} 
\, ,\ \ j=m+1,\dots ,l\, ,
$$
$$
{\mathcal K}=\bigcup\limits_{j\, =\, m}^l{\mathcal K}(j)\, ;\ \ \ \ {\mathcal K}
\subseteq {\mathcal K}_{\varkappa /2}\, .
$$
The following estimates are valid:
\begin{equation} \label{2.18}
\sqrt {\frac {\pi }{|\gamma |}}\ \, \| \phi \| \, \leq \, \| \widehat G_-^{\, 
1/2}\phi \| \, ,\qquad \phi \in {\mathcal H}({\mathcal K}(m))\, ,
\end{equation}
\begin{equation} \label{2.19}
h^{\, (j-1)/2}\, \| \phi \| \, \leq \, \| \widehat G_-^{\, 1/2}
\phi \| \, ,\qquad \phi \in {\mathcal H}({\mathcal K}(j))\, ,\qquad j=m+1,\dots ,l\, .
\end{equation}
For functions $\phi \in {\mathcal H}({\mathcal K})$, define the functions
$$
\phi _j\, =\, \sum\limits_{N\, \in \, {\mathcal K}(j)}\phi _N\, e^{\, 2\pi i \,
(N,x)}\, ,\qquad j=m,\dots ,l\, .
$$
We have $\phi _j\in {\mathcal H}({\mathcal K}(j))$, $j=m,\dots ,l$, and
$\phi =\sum\limits_{j\, =\, m}^l\phi _j\, $.

Using Lemma \ref{l2.4} and estimates \eqref{2.18} and \eqref{2.19}, for all 
$\varepsilon \in (0,\frac 1n)$, we deduce that
\begin{equation} \label{2.20}
\| {\mathcal W}\phi \| \, \leq \, \sum\limits_{j\, =\, m}^l\| {\mathcal W}\phi _j
\| \, \leq \, C(n,\varepsilon )\ \| {\mathcal W}\| _{n,\, w}\ \varkappa ^{\, 1/2
-1/n-\varepsilon }\, \sum\limits_{j\, =\, m}^lh^{\, j(1/2+1/n-
\varepsilon )}\, \| \phi _j\| \, \leq 
\end{equation}
$$
C(n,\varepsilon )\, \| {\mathcal W}\| _{n,\, w}\ \varkappa ^{\, 1/2
-1/n-\varepsilon }\, \times
$$
$$
\biggl( \, \sqrt {\frac {\pi }{|\gamma |}}\ \, h^{\,
m(1/2+1/n-\varepsilon )}\, \| \widehat G_-^{\, 1/2}\phi _m\| \, +\,
\sum\limits_{j\, =\, m+1}^lh^{\, -j/2+1/2}\, h^{\, j(1/2+1/n-
\varepsilon )}\, \| \widehat G_-^{\, 1/2}\phi _j\| \biggr) \, \leq
$$
$$
C(n,\varepsilon )\, \| {\mathcal W}\| _{n,\, w}\, \varkappa ^{\, 1/2}\, \times
$$
$$
\biggl( \, \sqrt {\frac {\pi }{|\gamma |}}\ \, h^{\, m(1/2+1/n-
\varepsilon )}\, \varkappa ^{\, -1/n+\varepsilon }\, \| \widehat G_-^{\, 
1/2}\phi _m\| \, +\, 2^{\, -1/n+\varepsilon }\, h^{\, 1/2}\, 
\sum\limits_{j_1\, =\, 0}^{l-m-1}h^{\, -j_1(1/n-\varepsilon )}\, \| \widehat 
G_-^{\, 1/2}\phi _j\| \biggr) \, .
$$
Now, let $\varepsilon =\frac 1{2n}\, $. Then \eqref{2.20} implies that there is
a number $\varkappa _0>0$ such that for all $\varkappa \geq \varkappa _0\,
$, all vectors $k\in {\mathbb R}^n$ with $|(k,\gamma )|=\pi $, and all functions
$\phi \in {\mathcal H}({\mathcal K})$, the inequality
\begin{equation} \label{2.21}
\| {\mathcal W}\phi \| \, \leq \, \widetilde C_1\, \| {\mathcal W}\| _{n,\, w}\, 
\varkappa ^{\, 1/2}\, \| \widehat G_-^{\, 1/2}\phi \| \, \leq \,
\widetilde C_1\, \| {\mathcal W}\| _{n,\, w}\, \| \widehat L^{\, 1/2}(k+i
\varkappa e)\phi \| 
\end{equation}
holds, where $\widetilde C_1=\widetilde C_1(n)=4\, C(n,\frac 1{2n})\, (1-2^{\, 
-1/(2n)})^{-1}$. On the other hand, for all $\phi \in \widetilde H^1(K)$ and
all $k\in {\mathbb R}^n$, we have
\begin{equation} \label{2.22} 
\| {\mathcal W}\phi \| \, \leq \, \| {\mathcal W}\| _{n,\, w}\, \biggl(  
\widetilde C_2\, \biggl( \, \sum\limits_{j\, =\, 1}^n\, \bigl\| \bigl( k_j-i \, 
\frac {\partial}{\partial x_j}\, \bigr) \phi \, \bigr\| ^2\biggr) ^{1/2}+\widetilde 
C_3\, \| \phi \| \biggr) \, ,
\end{equation}
where $\widetilde C_2=\widetilde C_2(n)>0$ and $\widetilde C_3=\widetilde C_3(n,
\Lambda )>0$ (see \cite{RS2} and also \cite{Sh2,MZ}). Since $G_N^-(k+i \varkappa e)
\geq \frac 13\, |k+2\pi N|$ and, consequently, $G_N^+\, G_N^-\geq \frac 13\, |k+2\pi 
N|^2$ for all $N\in \Lambda ^*\backslash \, {\mathcal K}$, from \eqref{2.22}
it follows that there exists a number $\widetilde \varkappa _0>0$ such that for all 
$\varkappa \geq \widetilde \varkappa _0\, $, all vectors $k\in {\mathbb R}^n$ with 
$|(k,\gamma )|=\pi $, and all functions $\phi \in \widetilde H^1(K)\cap {\mathcal H}
(\Lambda ^*\backslash \, {\mathcal K})$, the inequality
\begin{equation} \label{2.23}
\| {\mathcal W}\phi \| \, \leq \, 2\widetilde C_2\, \| {\mathcal W}\| _{n,\, w}\, 
\| \widehat L^{\, 1/2}(k+i \varkappa e)\phi \|
\end{equation}
is valid. Now, Theorem \ref{th1.2} directly follows from \eqref{2.21} and 
\eqref{2.23}.

\section{Proof of Theorem \ref{th1.3}}

Without loss of generality we shall assume that $A_0=0$. 

Let ${\mathcal F}$ be a nonnegative function from the Schwartz space
${\mathcal S}({\mathbb R}^n)$ such that $\int\limits_{{\mathbb R}^n}
{\mathcal F}(x)\, d x=1$ and the Fourier transform
$\widehat {\mathcal F}$ has a compact support; ${\mathcal F}_r(x)=r^n{\mathcal 
F}(rx)$, $r>0$, $x\in {\mathbb R}^n$. For $r>0$, we use the notation
$$
A^{(0)}(x)\, =\, \int\limits_{{\mathbb R}^n}A(x-y)\, {\mathcal F}_r(y)\, d y
\, ,\qquad x\in {\mathbb R}^n\, .
$$
The function $A^{(0)}:{\mathbb R}^n\to {\mathbb R}^n$ is a trigonometric polynomial 
with the period lattice $\Lambda \subset {\mathbb R}^n$, $A_0^{(0)}=0$. Furthermore,
the function $A^{(0)}$ obeys the condition $(A_2)$ of Theorem \ref{th0.1} 
and, moreover, 
$$
\| \, |A^{(0)}|\, \| _{2,\, \gamma }\, \leq \, \| \, |A|\, \| _{2,\, \gamma }
\, \leq \, {\mathfrak a}\, ,\qquad \theta (\Lambda ,\gamma ,\mu , h;A^{(0)})\, 
\leq \, \theta (\Lambda ,\gamma ,\mu ,h;A)\, \leq \, \Theta \, .
$$
For any $\varepsilon >0$, taking the number $r>0$ 
to be sufficiently large, we can also suppose that for the function 
$A^{(1)}\doteq A -A^{(0)}$, the estimate
\begin{equation} \label{3.2}
\| \, |A^{(1)}|\, \| _{2,\, \gamma }\, \leq \, \varepsilon \, \| \, |A|\, \| _{2,\, 
\gamma }
\end{equation}
holds (see, e.g., \cite{Vest,Arch}). Besides, the condition $(A_1)$ 
is fulfilled for the function $A^{(1)}$ (and $A^{(1)}_0=0$).

Since the condition $(A_1)$ implies inequalities \eqref{0.2} for the functions
$A$ and $A^{(1)}$, we get 
\begin{equation} \label{3.1}
W(A;k+i \varkappa e;\psi ,\phi )\, =\, W(A^{(0)};k+i \varkappa e;\psi ,\phi 
)\, +\, 2\, \sum\limits_{j\, =\, 1}^n\, (A^{(0)}_j\psi ,A^{(1)}_j\phi )\, -
\end{equation}
$$
\sum\limits_{j\, =\, 1}^n\, \bigl( A^{(1)}_j\psi ,\bigl( -i \, \frac {\partial}
{\partial x_j}+k_j+i \varkappa e_j\bigr) \phi )\, -\, \sum\limits_{j\, =\, 1}^n\, 
\bigl( \bigl( -i \, \frac {\partial}{\partial x_j}+k_j-i \varkappa e_j\bigr) 
\psi ,A^{(1)}_j\phi \bigr) \, +
$$
$$
\sum\limits_{j\, =\, 1}^n\, (A^{(1)}_j\psi ,A^{(1)}_j\phi )\, ,\qquad
\psi ,\phi \in \widetilde H^1(K)
$$
for all $k\in {\mathbb R}^n$ and all $\varkappa \geq 0$. For any measurable
function $e^*:K\to S_{n-1}$ and for all $\varkappa \geq 0$, all $k\in {\mathbb R}
^n$, and all $\phi \in \widetilde H^1(K)$
\begin{equation} \label{3.3}
\biggl\| \ \sum\limits_{j\, =\, 1}^ne^*_j\bigl( k_j-i \, \frac {\partial}{\partial 
x_j}\, \bigr) \, \phi \, \biggr\| ^2\, =\, \int\limits_K\ \, \biggl| \, \sum\limits
_{j\, =\, 1}^ne^*_j\bigl( k_j-i \, \frac {\partial}{\partial x_j}\, \bigr) \, \phi 
\, \biggr| ^2\, d x\, \leq
\end{equation}
$$
\int\limits_K\, \sum\limits_{j\, =\, 1}^n\, \bigl| \bigl( k_j-i \, \frac 
{\partial}{\partial x_j}\, \bigr) \, \phi \, \bigr| ^2\, d x\, =
\, v(K)\, \sum\limits_{N\, \in \, \Lambda ^*}|k+2\pi N|^2\, | \phi _N| ^2
\, \leq \, \| \widehat G_+(k+i \varkappa e)\phi \| ^2\, .
$$
On the other hand, from Lemma \ref{l1.1} it follows that for all $\varkappa \geq 0$,
all vectors $k\in {\mathbb R}^n$ with $|(k,\gamma )|=\pi $, and all functions 
$\phi \in \widetilde H^1(K)$
\begin{equation} \label{3.4}
\| \, |A^{(1)}|\, \phi \| \, \leq \, C_2\, \| \, |A^{(1)}|\, \| _{2,\, \gamma }\,
\bigl\| \bigl( k_1-i \, \frac {\partial}{\partial x_1}\, \bigr) \phi \, \bigr\| \,
\leq \, C_2\, \| \, |A^{(1)}|\, \| _{2,\, \gamma }\, \| \widehat G_-(k+i \varkappa 
e)\phi \| \, , 
\end{equation}
where $C_2=C_2(n,|\gamma |)>0$. Given $\phi \in \widetilde H^1(K)$, let us define
the functions
$$
\phi ^{\, (0)}(x)\doteq \sum\limits_{N\, \in \, \Lambda ^*\, :\, 2\pi |N|\, \leq
\, 2\varkappa }\phi _N\, e^{\, 2\pi i \, (N,x)},\quad  \phi ^{\, (1)}(x)\doteq 
\phi (x)-\phi ^{\, (0)}(x),\qquad x\in {\mathbb R}^n. 
$$
Since $G_N^-(k+i \varkappa e)>\frac 13 \, G_N^+(k+i \varkappa e)$ for all
$N\in \Lambda ^*$ with $2\pi |N|>2\varkappa $, from \eqref{3.3} (where we put $e^*(x)=
|A(x)|^{-1}A(x)$ if $A(x)\neq 0$, $x\in K$) and \eqref{3.4} (under the condition 
$|(k,\gamma )|=\pi $) we derive
\begin{equation} \label{3.5}
\biggl| \ \sum\limits_{j\, =\, 1}^n\, \bigl( A^{(1)}_j\psi , \bigl( k_j-i \, 
\frac {\partial}{\partial x_j}\, \bigr) \, \phi ^{\, (1)}\, \bigr) \, \biggr| \, \leq
C_2\, \| \, |A^{(1)}|\, \| _{2,\, \gamma }\ \| \widehat G_-\psi \| \cdot
\| \widehat G_+\phi ^{\, (1)}\| \, \leq
\end{equation}
$$
\sqrt 3\, C_2\, \| \, |A^{(1)}|\, \| _{2,\, \gamma }\ \| \widehat L^{\, 1/2}\psi \| 
\cdot \| \widehat L^{\, 1/2}\phi ^{\, (1)}\| \, ,\qquad \psi ,\phi \in \widetilde 
H^1(K)\, .
$$

\begin{lemma} \label{l3.1}
For all $\varkappa \geq 0$, all vectors $k\in {\mathbb R}^n$ with $|(k,\gamma )|
=\pi $, and all functions $\psi ,\phi \in \widetilde H^1(K)$, the estimates
\begin{equation} \label{3.6}
\bigl| (A^{(1)}_j\psi , \phi ) \bigr| \leq C_2\, \| \, |A^{(1)}|\, \| _{2,\, 
\gamma }\ \| \widehat G_-^{\, 1/2}(k+i \varkappa e)\psi \| \cdot \| \widehat 
G_-^{\, 1/2}(k+i \varkappa e)\phi \| ,\quad j=1,\dots ,n,
\end{equation}
hold, where $C_2=C_2(n,|\gamma |)$ is the constant from \eqref{3.4}.
\end{lemma}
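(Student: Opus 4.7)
My plan is to obtain \eqref{3.6} by symmetrizing the one-sided bound \eqref{3.4} via complex interpolation. First I would rewrite \eqref{3.4}, together with the pointwise inequality $|A^{(1)}_j|\leq|A^{(1)}|$, as the operator bound
\[
\|A^{(1)}_j\,\widehat G_-^{-1}(k+i\varkappa e)\|_{L^2(K)\to L^2(K)}\;\leq\;C_2\,\|\,|A^{(1)}|\,\|_{2,\gamma}\,.
\]
Taking the $L^2$-adjoint (valid because $A^{(1)}_j$ is real-valued and hence self-adjoint as a multiplier, while $\widehat G_-(k+i\varkappa e)$ is a positive self-adjoint Fourier multiplier) yields the mirrored bound $\|\widehat G_-^{-1}A^{(1)}_j\|_{L^2\to L^2}\leq C_2\,\|\,|A^{(1)}|\,\|_{2,\gamma}$. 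The content of \eqref{3.6} is that the full weight $\widehat G_-$ may be split in half, with $\widehat G_-^{1/2}$ placed on each side of $A^{(1)}_j$.

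To carry out this splitting I would introduce the analytic family
\[
T_z\;=\;\widehat G_-^{-z}(k+i\varkappa e)\,A^{(1)}_j\,\widehat G_-^{-(1-z)}(k+i\varkappa e),\qquad 0\leq\mathrm{Re}\,z\leq 1,
\]
and apply Hadamard's three-lines theorem. Since $G^-_N\geq\pi/|\gamma|$ under the hypothesis $|(k,\gamma)|=\pi$, the operator $\widehat G_-$ is strictly positive and its imaginary powers $\widehat G_-^{\pm it}$ ($t\in\mathbb{R}$) are unitary on $L^2(K)$. On the line $\mathrm{Re}\,z=0$ one factors $T_{it}=\widehat G_-^{-it}\bigl(A^{(1)}_j\,\widehat G_-^{-1}\bigr)\widehat G_-^{it}$, so the first displayed inequality above gives $\|T_{it}\|_{L^2\to L^2}\leq C_2\,\|\,|A^{(1)}|\,\|_{2,\gamma}$; the parallel factorization on $\mathrm{Re}\,z=1$ together with the mirrored bound gives the same estimate for $\|T_{1+it}\|_{L^2\to L^2}$.

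I would then apply the three-lines theorem to the holomorphic scalar function $F(z)=\int_K (T_z f)(x)\,g(x)\,dx$ for $f,g$ finite trigonometric polynomials, where $F$ is entire and of controlled growth in the strip. This yields
\[
\|T_{1/2}\|_{L^2\to L^2}\;=\;\|\widehat G_-^{-1/2}A^{(1)}_j\widehat G_-^{-1/2}\|_{L^2\to L^2}\;\leq\;C_2\,\|\,|A^{(1)}|\,\|_{2,\gamma}\,.
\]
Substituting $f=\widehat G_-^{1/2}\psi$ and $g=\widehat G_-^{1/2}\phi$ for $\psi,\phi\in\widetilde H^1(K)$ converts this operator inequality into the form bound \eqref{3.6}, with the same constant $C_2$ as in \eqref{3.4} (because three-lines propagates geometric means of boundary norms).

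The main obstacle I expect is the rigorous setup of the interpolation: since $A^{(1)}_j$ is only an $L^2(K)$-multiplier, the operator $T_z$ need not be bounded on $L^2(K)$ for $\mathrm{Re}\,z$ strictly between $0$ and $1$, and one cannot interpolate directly in operator norm. My workaround is to keep everything at the level of the scalar quantity $F(z)$ on the dense subspace of trigonometric polynomials, where each composition reduces to a finite sum over Fourier modes, $F$ is genuinely entire, and the boundary estimates apply pointwise in $z$; one then passes to \eqref{3.6} by the density of trigonometric polynomials in $\widetilde H^1(K)$.
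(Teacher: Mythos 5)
Your argument is correct and is essentially the paper's own proof: the paper likewise defines an analytic family $\widehat{\mathcal R}_j(\zeta)=\widehat G_-^{-1+\zeta}A^{(1)}_j\widehat G_-^{-\zeta}$ (the same family as your $T_z$ up to the reparametrization $\zeta\leftrightarrow 1-z$), works with the scalar quantity $(\psi,\widehat{\mathcal R}_j(\zeta)\phi)$ to avoid unboundedness inside the strip, obtains boundary bounds from \eqref{3.4} and its adjoint using the unitarity of $\widehat G_-^{it}$, and applies the three-lines theorem to get the $\zeta=\tfrac12$ estimate. The only cosmetic difference is that you run the argument on trigonometric polynomials and then use their density in $\widetilde H^1(K)$, whereas the paper argues on $\widetilde H^{3/2}(K)$ and passes to $\widetilde H^1(K)$ by the same density reasoning.
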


\begin{proof}
For all $\zeta \in {\mathbb C}$ with $0\leq {\mathrm {Re}}\, \zeta \leq 1$
(and for fixed $\varkappa \geq 0$ and $k$), define the operators
$$
{\widehat {\mathcal R}}_j(\zeta )\, =\, \widehat G_-^{\, -1+\zeta }\, A^{(1)}_j\,
\widehat G_-^{\, -\zeta }\, ,\quad j=1,\dots ,n,
$$
$D({\widehat {\mathcal R}}_j(\zeta ))=\widetilde H^1(K)\subset L^2(K)$. From 
\eqref{3.4} it follows that for all functions $\psi ,\phi \in
\widetilde H^1(K)$, the functions ${\mathbb C}\ni \zeta \to (\psi ,{\widehat 
{\mathcal R}}_j(\zeta )\phi )$ are uniformly bounded for $0\leq {\mathrm {Re}}\, \zeta 
\leq 1$ and analytic for $0<{\mathrm {Re}}\, \zeta <1$. Furthermore,
\begin{equation} \label{3.7}
|(\psi ,{\widehat {\mathcal R}}_j(\zeta )\phi )|\, \leq \, C_2\, \| \, |A^{(1)}|\, 
\| _{2,\, \gamma }\ \| \psi \| \cdot \| \phi \|
\end{equation}
if ${\mathrm {Re}}\, \zeta =0$ or ${\mathrm {Re}}\, \zeta =1$. Hence estimates
\eqref{3.7} hold for all $\zeta \in {\mathbb C}$ with $0\leq {\mathrm {Re}}\, \zeta 
\leq 1$. In particular, for $\zeta =\frac 12\, $, inequalities \eqref{3.7} yield
the inequalities
$$
\| ( {\widehat G}_-^{\, -1/2}\, \psi , A^{(1)}_j\, {\widehat G}_-^{\, -1/2}\, \phi 
)\| \, \leq \, C_2\, \| \, |A^{(1)}|\, \| _{2,\, \gamma }\ \| \psi \| \cdot \| \phi \|
$$
which imply inequalities \eqref{3.6} for functions $\psi ,\phi \in \widetilde 
H^{\, 3/2}(K)$. Since the set $\widetilde H^{\, 3/2}(K)$ is dense in the Sobolev
class $\widetilde H^1(K)$, by continuity, estimates \eqref{3.6} are also valid for
all functions $\psi ,\phi \in \widetilde H^1(K)$.
\end{proof}

By Lemma \ref{l3.1}, it follows that
$$ 
\biggl| \ \sum\limits_{j\, =\, 1}^n\, \bigl( A^{(1)}_j\psi , \bigl( k_j-i \, 
\frac {\partial}{\partial x_j}\, \bigr) \, \phi ^{(0)}\, \bigr) \, \biggr| \, \leq
$$
$$
C_2\, \| \, |A^{(1)}|\, \| _{2,\, \gamma }\ \| \widehat G_-^{\, 1/2}\psi \| \cdot
\sum\limits_{j\, =\, 1}^n\, \bigl\| \widehat G_-^{\, 1/2}\, \bigl( k_j-i \, 
\frac {\partial}{\partial x_j}\, \bigr) \, \phi ^{\, (0)}\bigr\| \, \leq
$$
$$
2n\, C_2\, \varkappa \, \| \, |A^{(1)}|\, \| _{2,\, \gamma }\ \| \widehat G_-^{\, 
1/2}\psi \| \cdot \| \widehat G_-^{\, 1/2}\phi ^{\, (0)}\| \, \leq \, 2n\, C_2 \, \| \, 
|A^{(1)}|\, \| _{2,\, \gamma }\ \| \widehat L^{\, 1/2}\psi \| \cdot \| \widehat L^{\, 
1/2}\phi ^{\, (0)}\| \, .
$$
This inequality and inequality \eqref{3.5} imply that for all $\varkappa \geq 0$, all 
vectors $k\in {\mathbb R}^n$ with $|(k,\gamma )|=\pi $, and all functions $\psi ,\phi 
\in \widetilde H^1(K)$, the following estimate holds:
\begin{equation} \label{3.8}
\biggl| \ \sum\limits_{j\, =\, 1}^n\, \bigl( A^{(1)}_j\psi , \bigl( k_j-i \, 
\frac {\partial}{\partial x_j}\, \bigr) \, \phi \, \bigr) \, \biggr| \, \leq
\end{equation}
$$
(\sqrt 3+2n)\, C_2\, \| \, |A^{(1)}|\, \| _{2,\, \gamma }\ \| \widehat L^{\, 1/2}
(k+i \varkappa e)\psi \| \cdot \| \widehat L^{\, 1/2}(k+i \varkappa e)\phi 
\| \, .
$$

By analogy with Lemma \ref{l3.1}, using \eqref{3.4}, we obtain 
$$
\biggl| \ \sum\limits_{j\, =\, 1}^n\, \bigl( A^{(1)}_j\psi , (i \varkappa e_j) 
\phi \bigr) \, \biggr| \, =\, \biggl| \ \sum\limits_{j\, =\, 1}^n\, \bigl( (-i
\varkappa e_j)\psi , A^{(1)}_j\phi \bigr) \, \biggr| \, \leq \, C_2\, \| \, |A^{(1)}|\, \| 
_{2,\, \gamma }\ \| \widehat L^{\, 1/2}\psi \| \cdot \| \widehat L^{\, 1/2}
\phi \|
$$
(for all functions $\psi ,\phi \in \widetilde H^1(K)$). The last inequality and 
\eqref{3.8} yield
\begin{equation} \label{3.9}
\biggl| \ \sum\limits_{j\, =\, 1}^n\, \bigl( A^{(1)}_j\psi , \bigl( -i \, \frac 
{\partial}{\partial x_j}+k_j+i \varkappa e_j\, \bigr) \, \phi \, \bigr) +
\sum\limits_{j\, =\, 1}^n\, \bigl( \bigl( -i \, \frac {\partial}
{\partial x_j}+k_j-i \varkappa e_j\, \bigr) \, \psi , A^{(1)}_j\phi \, \bigr) \, 
\biggr| \, \leq 
\end{equation}
$$
2\, (1+\sqrt 3 +2n)\, C_2\ \| \, |A^{(1)}|\, \| _{2,\, \gamma }\ \| \widehat L^{\, 
1/2}\psi \| \cdot \| \widehat L^{\, 1/2}\phi \| \, .
$$
We also have (see \eqref{3.4})
\begin{equation} \label{3.10}
\biggl| \ \sum\limits_{j\, =\, 1}^n\, \bigl( A^{(0)}_j\psi , A^{(1)}_j\phi \bigr)
\, \biggr| \, \leq \, \| \, |A^{(0)}|\psi \, \| \cdot \| \, |A^{(1)}|\phi \, \| \, \leq
\end{equation}
$$
C_2^2\, \| \, |A|\, \| _{2,\, \gamma }\ \| \, |A^{(1)}|\, \| _{2,\, 
\gamma }\ \| \widehat G_-\psi \| \cdot \| \widehat G_-\phi \| \, \leq \, C_2^2\, \| 
\, |A|\, \| _{2,\, \gamma }\ \| \, |A^{(1)}|\, \| _{2,\, \gamma }\ \| \widehat L^{\, 
1/2}\psi \| \cdot \| \widehat L^{\, 1/2}\phi \| \, ,
$$
\begin{equation} \label{3.11}
\biggl| \ \sum\limits_{j\, =\, 1}^n\, \bigl( A^{(1)}_j\psi , A^{(1)}_j\phi \bigr)
\, \biggr| \, \leq \, \| \, |A^{(1)}|\psi \, \| \cdot \| \, |A^{(1)}|\phi \, \| \, \leq
\end{equation}
$$
C_2^2\, \| \, |A^{(1)}|\, \| _{2,\, \gamma }^2\ \| \widehat G_-\psi \| 
\cdot \| \widehat G_-\phi \| \, \leq \, C_2^2\, \| \, |A^{(1)}|\, \| _{2,\, \gamma 
}^2\ \| \widehat L^{\, 1/2}\psi \| \cdot \| \widehat L^{\, 1/2}\phi \| \, ,
\qquad \psi ,\phi \in \widetilde H^1(K)\, .
$$

Since the number $\varepsilon >0$ can be chosen arbitrarily small in the condition 
\eqref{3.2}, we get from \eqref{3.1} and \eqref{3.9}, \eqref{3.10}, and 
\eqref{3.11} that it suffices to prove Theorem \ref{th1.3} only for the function 
$A^{(0)}$. Indeed, it suffices to assume that the number $\varepsilon >0$ obeys
the condition
$$
2\varepsilon \, (1+\sqrt 3 +2n)\, C_2\, {\mathfrak a}+\varepsilon (\varepsilon +2)\, 
C_2^2\, {\mathfrak a}^2<\frac 12\, C_1
$$
and then replace $\frac 12\, C_1$ by $C_1$. Therefore, in what follows, using the
former notation $A^{(0)}=A$ we shall suppose that the magnetic potential $A$ 
is a trigonometric polynomial. 

Let $\widehat \alpha _j\, $, $j=1,\dots ,n$, be Hermitian $M\times M\, $-matrices 
such that
\begin{equation} \label{anti}
\widehat \alpha _j\widehat \alpha _l+\widehat \alpha _l\widehat \alpha _j=2\delta
_{jl}\widehat I_M\, ,
\end{equation}
where $\widehat I_M$ is the identity $M\times M\, $-matrix and $\delta _{jl}$ is the
Kronecker delta. Such matrices exist for $M=\frac {n+1}2$ if $n\in 2{\mathbb N}+1$, 
and for $M=\frac n2+1$ if $n\in 2{\mathbb N}$. Let
$$
\widehat {\mathcal D}(A;k+i \varkappa e)=\sum\limits_{j\, =\, 1}^n\widehat \alpha _j
\bigl( -i \, \frac {\partial}{\partial x_j}-A_j+k_j+i \varkappa e_j\, \bigr)
$$
be the Dirac operator acting on $L^2(K,{\mathbb C}^M)$ with the domain
$D(\widehat {\mathcal D}(A;k+i \varkappa e))=\widetilde H^1(K;{\mathbb 
C}^M)$, $k\in {\mathbb R}^n$, $\varkappa \geq 0$. We have
\begin{equation} \label{3.12}
\widehat {\mathcal D}^2(A;k+i \varkappa e)=\widehat H(A;k+i \varkappa e)
\otimes \widehat I_M\, +\, \frac {i }2\, \sum\limits_{j\, \neq \, l}\, \biggl( 
\, \frac {\partial A_l}{\partial x_j}-\frac {\partial A_j}{\partial x_l}\, \biggr)\,
\widehat \alpha _j\widehat \alpha _l\, ,
\end{equation}
$$
D(\widehat {\mathcal D}^2(A;k+i \varkappa e))=D(\widehat H(A;k+i \varkappa e)
\otimes \widehat I_M)=\widetilde H^2(K;{\mathbb C}^M)\, .
$$

For all vector functions $\phi \in \widetilde H^1(K;{\mathbb C}^M)$, 
$$
\widehat {\mathcal D}(0;k+i \varkappa e)\phi =\sum\limits_{N\, \in \, \Lambda ^*}
\widehat {\mathcal D}_N(k;\varkappa )\, \phi _N\, e^{\, 2\pi i \, (N,x)},
$$
where 
$$
\widehat {\mathcal D}_N(k;\varkappa )=\sum\limits_{j=1}^n\, (k_j+2\pi N_j+i
\varkappa e_j)\, \widehat \alpha _j\, ,\qquad N_j=(N,{\mathcal E}_j)\, ,\ \ \
j=1,\dots ,n\, . 
$$
In the following, we shall use the notation $\widehat {\mathbb G}_{\pm}^{\, \zeta }=
\widehat {\mathbb G}_{\pm}^{\, \zeta }(k+i \varkappa e)=\widehat G_{\pm}^{\, \zeta }
\otimes \widehat I_M\, $, $\zeta \in {\mathbb C} $ (and $\widehat {\mathbb G}_{\pm}
\doteq \widehat {\mathbb G}_{\pm}^1$); 
$$
D(\widehat {\mathbb G}_{\pm}^{\, \zeta })=\left\{
\begin{array}{ll}
\widetilde H^{\, {\mathrm {Re}}\, \zeta }\, (K;{\mathbb C}^M) & 
{\mathrm {if}} \ \, {\mathrm {Re}}\, \zeta >0\, , \\ [0.2cm]
L^2(K;{\mathbb C}^M) & {\mathrm {if}} \ \, {\mathrm {Re}}\, \zeta \leq 0\, .
\end{array}
\right.
$$
Let $\widehat {\mathbb L}=\widehat {\mathbb L}(k+i \varkappa e)=\widehat {\mathbb 
G}_+\, \widehat {\mathbb G}_-\, $, then $\widehat {\mathbb L}^{\, 1/2}=\widehat 
{\mathbb L}^{\, 1/2}(k+i \varkappa e)=\widehat {\mathbb G}_+^{\, 1/2}\, 
\widehat {\mathbb G}_-^{\, 1/2}\, $.

For all $k\in {\mathbb R}^n$, all $\varkappa \geq 0$, and all $N\in \Lambda ^*$, 
the inequalities
$$
G^-_N(k;\varkappa )\, \| u\| \, \leq \, \| \widehat {\mathcal D}_N(k;\varkappa ) u\| 
\, \leq \, G^+_N(k;\varkappa )\, \| u\| \, ,\ u\in {\mathbb C}^M\, ,
$$
hold. Hence, for all vector functions $\phi \in \widetilde H^1(K;{\mathbb C}^M)$,
$$
\| \widehat {\mathbb G}_-\phi \| \, \leq \, \| \widehat {\mathcal D}(0;k+
i \varkappa e)\phi \| \, \leq \, \| \widehat {\mathbb G}_+\phi \| \, .
$$

For vectors $\widetilde e\in S_{n-2}(e)$, define the orthogonal projections on 
${\mathbb C}^M$:
$$
\widehat P^{\, \pm}_{\widetilde e}=\frac 12\, \bigl( \widehat I\mp i \, \bigl( \,
\sum\limits_{j=1}^ne_j\widehat \alpha _j\bigr) \bigl( \, \sum\limits_{j=1}^n
\widetilde e_j\widehat \alpha _j\bigr) \bigr) \, .
$$ 
We write $\widetilde e(y)\doteq |y_{\perp }|^{-1}y_{\perp }\in S_{n-2}(e)$ for
vectors $y\in {\mathbb R}^n$ with $y_{\perp }\neq 0$. 

If $k\in {\mathbb R}^n$, $N\in \Lambda ^*$, and $k_{\perp }+2\pi N_{\perp }\neq 0$,
then
\begin{equation} \label{3.13}
\widehat P^{\, \pm}_{\widetilde e(k+2\pi N)}\, \widehat {\mathcal D}_N(k;
\varkappa )\, \widehat P^{\, \pm}_{\widetilde e(k+2\pi N)}=\widehat O_M  
\end{equation}
(where $\widehat O_M$ is the zero $M\times M\, $-matrix) and, for all vectors 
$u\in {\mathbb C}^M$ (and all $\varkappa \geq 0$),
\begin{equation} \label{3.14}
\| \widehat {\mathcal D}_N(k;\varkappa )\widehat P^{\, \pm}_{\widetilde 
e(k+2\pi N)}u\| =G^{\pm}_N(k+i \varkappa e)\, \| \widehat P^{\, \pm}_{\widetilde 
e(k+2\pi N)}u\| \, . 
\end{equation}
If $k_{\perp }+2\pi N_{\perp }=0$, then $G^+_N(k+i \varkappa e)=G^-_N(k+i
\varkappa e)\, $.

Let ${\mathfrak K}(\gamma )$ be the set of vectors $k\in {\mathbb R}^n$ such that
$k_{\perp }+2\pi N_{\perp }\neq 0$ for all $N\in \Lambda ^*\, $; ${\mathfrak K}
_{\pi }(\gamma )\doteq {\mathfrak K}(\gamma )\cap \{ k\in {\mathbb R}^n : 
|(k,\gamma )|=\pi \} $.

Given $k\in {\mathfrak K}(\gamma )$, denote by $\widehat P^{\, \pm}=\widehat 
P^{\, \pm}(k;e)$ the orthogonal projections on $L^2(K;{\mathbb C}^M)$: 
$$
\widehat P^{\, \pm }\phi =\sum\limits_{N\, \in \, \Lambda ^*}\widehat P^{\, \pm }
_{\widetilde e(k+2\pi N)}\, \phi _N\, e^{\, 2\pi i \, (N,x)}\, ,\qquad
\phi \in L^2(K;{\mathbb C}^M)\, .
$$ 
Since $\widehat P^{\, +}+\widehat P^{\, -}=\widehat I$ (where $\widehat I$ is the
identity operator on $L^2(K;{\mathbb C}^M)$), from \eqref{3.13} and \eqref{3.14} 
it follows that
$$
\| \widehat P^{\, \pm }\, \widehat {\mathcal D}(0;k+i \varkappa e)\phi \| =
\| \widehat {\mathbb G}_{\mp }\widehat P^{\, \mp }\phi \| \, ,
$$
$$
\| \widehat {\mathcal D}(0;k+i \varkappa e)\phi \| ^2=\| \widehat {\mathbb G}_-
\widehat P^{\, -}\phi\| ^2+\| \widehat {\mathbb G}_+\widehat P^{\, +}
\phi \| ^2\, ,\qquad \phi \in \widetilde H^1(K;{\mathbb C}^M)\, .
$$

\begin{theorem}[{see\,\cite{Arch}}] \label{th3.1}
Let $n\geq 3$, ${\mathfrak a}\geq 0$, $\Theta \in [0,1)$, and $R\geq 0$. 
Suppose $A\in L^2(K;{\mathbb R}^n)$, $A_0=0$ and $\mathrm ($for the magnetic 
potential $A$$\mathrm )$ the conditions $(A_1)$ and $(A_2)$ are satisfied for 
a vector $\gamma \in \Lambda \backslash \{ 0\} $ and a measure $\mu \in 
{\mathfrak M}_h\, $, $h>0$, and, moreover, $\| \, |A|\, \| _{2,\, \gamma }\leq 
{\mathfrak a}$, $\theta (\Lambda ,\gamma ,h,\mu ;A)\leq \Theta $, and $A_N=0$ for
all vectors $N\in \Lambda ^*$ with $2\pi |N_{\perp }|>R$.
Then there exists a constant $\widetilde C_1=\widetilde C_1\, 
(n,\Lambda ,|\gamma |,h,\| \mu \| ;{\mathfrak a},\Theta )\in (0,1)$ such that 
for every $\delta \in (0,1)$ there is a number $\widetilde a=\widetilde a\, 
(\widetilde C_1;\delta ,R)\in (0,\widetilde C_1]$ such that for any $a\in 
(0,\widetilde a]$, the estimate
\begin{equation} \label{3.15}
\| (\, \widehat P^{\, +}+a\, \widehat P^{\, -})\, \widehat {\mathcal D}
(A;k+i \varkappa e)\, \phi \, \| ^2\, \geq (1-\delta )\, \| (\, \widetilde
C_1\, \widehat {\mathbb G}_-\, \widehat P^{\, -}+a\, \widehat 
{\mathbb G}_+\, \widehat P^{\, +}\, )\, \phi \, \| ^2
\end{equation}
holds for all vectors $k\in {\mathfrak K}_{\pi }(\gamma )$, all vector functions 
$\phi \in \widetilde H^1(K;{\mathbb C}^M)$, and all sufficiently large numbers 
$\varkappa \geq \varkappa _0>0$ ${\mathrm (}$where $\varkappa _0$ depends on the
number $a$ but does not depend on $k$ and $\phi $${\mathrm )}$. 
\end{theorem}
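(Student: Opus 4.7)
The strategy is to deduce \eqref{3.15} from the exact identity available when $A \equiv 0$, and then absorb the perturbation $\widehat{\mathcal D}(A; k+i\varkappa e) - \widehat{\mathcal D}(0; k+i\varkappa e) = -\sum_j A_j\,\widehat\alpha_j$ via a scalar gauge transformation justified by $(A_2)$, together with explicit operator-norm estimates on the residual. In the free case, the ranges of $\widehat P^+$ and $\widehat P^-$ are orthogonal with $\widehat P^+ + \widehat P^- = \widehat I$, so \eqref{3.13}--\eqref{3.14} give
$$\|(\widehat P^+ + a\,\widehat P^-)\,\widehat{\mathcal D}(0;k+i\varkappa e)\,\phi\|^2 = \|\widehat{\mathbb G}_-\widehat P^-\phi\|^2 + a^2\,\|\widehat{\mathbb G}_+\widehat P^+\phi\|^2,$$
which is \eqref{3.15} with $\widetilde C_1 = 1$ and $\delta = 0$.

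For general $A$, the bound $\theta(\Lambda,\gamma,h,\mu;A) \leq \Theta < 1$ supplies a scalar function $\Phi$ obtained by integrating the longitudinal component $(A,e)$ along $\gamma$ and convolving transversally against $\mu$. Conjugation by $e^{i\Phi}$ sends $\widehat{\mathcal D}(A;\cdot)$ to $\widehat{\mathcal D}(A-\nabla\Phi;\cdot)$ unitarily, leaving \eqref{3.15} invariant, while the residual $A' \doteq A - \nabla\Phi$ inherits $(A_1)$, a bound on $\|\,|A'|\,\|_{2,\gamma}$ in terms of $\mathfrak a$, and the finite-range restriction $A'_N = 0$ for $2\pi|N_\perp| > R$; moreover, its $\gamma$-line averages are bounded pointwise by $\Theta\,\pi/|\gamma|$. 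With $A$ replaced by $A'$, one expands $\|(\widehat P^+ + a\widehat P^-)\widehat{\mathcal D}(A';k+i\varkappa e)\phi\|^2$ around the free square and controls the cross term arising from $-\sum_j A'_j\widehat\alpha_j$ by Cauchy--Schwarz. For the diagonal blocks $\widehat P^\pm A'_j \widehat P^\pm$, Lemma \ref{l1.1} combined with the uniform bound on $\gamma$-line averages of $A'$ yields a fraction, proportional to $\Theta$, of $\|\widehat{\mathbb G}_-\widehat P^-\phi\|^2 + a^2\|\widehat{\mathbb G}_+\widehat P^+\phi\|^2$; the subordinate $C_\varepsilon\|\phi\|$-tail from Lemma \ref{l1.1} is absorbed once $\varkappa \geq \varkappa_0$ is large enough, using $G^-_N \geq \pi/|\gamma|$ for $|(k,\gamma)| = \pi$.

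The decisive difficulty is the off-diagonal coupling $\widehat P^+ A'_j \widehat P^-$, which is not killed by the smallness of $\theta$ and which acquires only the weight $\widehat{\mathbb G}_-$ rather than $\widehat{\mathbb G}_+$. The finite-range hypothesis $A'_N = 0$ for $2\pi|N_\perp| > R$ is essential here: in Fourier coordinates such coupling connects only pairs $(M,N) \in \Lambda^* \times \Lambda^*$ with $|(M-N)_\perp| \leq R/(2\pi)$, and for such pairs the projections $\widehat P^+_{\widetilde e(k+2\pi M)}$ and $\widehat P^-_{\widetilde e(k+2\pi N)}$ become nearly complementary once $|k_\perp + 2\pi N_\perp|$ is much larger than $R$, so off-diagonal contributions from large perpendicular modes are suppressed by a factor of order $R/|k_\perp + 2\pi N_\perp|$. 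Choosing $\widetilde a = \widetilde a(\widetilde C_1;\delta,R) \in (0,\widetilde C_1]$ small enough to absorb the $a$-weighted exceptional contribution from the bounded regime where $|k_\perp + 2\pi N_\perp|$ is comparable to $R$, and then taking $\varkappa_0$ large enough (depending on $a$) to dominate the $\varkappa$-independent lower-order tails, delivers \eqref{3.15} with the required constants.
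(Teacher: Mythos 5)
Theorem~\ref{th3.1} is not proved in this paper at all: it is imported verbatim from reference~\cite{Arch} (Danilov's preprint on the periodic magnetic Dirac operator), and the only commentary offered here is Remark~5, which observes that the proof in \cite{Arch} actually establishes the estimate for all $a\in(0,\widetilde a\,]$ and not merely for $a=\widetilde a$. So there is no in-paper proof to compare your argument against; I can only assess your sketch on its own terms, and it contains a genuine gap.

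The central flaw is the gauge-transformation step. You claim that conjugation by $e^{i\Phi}$, with $\Phi$ built from the longitudinal line integral of $(A,e)$ smeared against $\mu$, replaces $A$ by $A-\nabla\Phi$ ``leaving \eqref{3.15} invariant.'' That is not so. The projections $\widehat P^{\,\pm}$ and the weights $\widehat{\mathbb G}_{\pm}$ are Fourier multipliers: $\widehat P^{\,\pm}$ acts mode-by-mode through the matrices $\widehat P^{\,\pm}_{\widetilde e(k+2\pi N)}$, and $\widehat{\mathbb G}_{\pm}$ multiply the $N$-th coefficient by $G^{\pm}_N$. Multiplication by a non-constant scalar $e^{i\Phi}$ mixes Fourier modes and therefore commutes with none of $\widehat P^{\,\pm}$, $\widehat{\mathbb G}_{\pm}$, $\widehat{\mathbb L}^{1/2}$. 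Consequently neither side of \eqref{3.15} is gauge-covariant, and the reduction to $A'=A-\nabla\Phi$ does not go through. There is a second, independent problem: the hypothesis $\theta(\Lambda,\gamma,h,\mu;A)\le\Theta<1$ is a \emph{vector}-valued smallness condition on the $\mu$-smoothed line average of the full field $A$; it does not say that $A$ is close to a gradient, nor does integrating only the longitudinal component $(A,e)$ produce a $\Lambda$-periodic $\Phi$ for which $e^{i\Phi}$ is a well-defined unitary on $L^2(K;\mathbb C^M)$ in general. In the actual proof in \cite{Arch} the constant $\theta$ enters not through a gauge change but through direct Fourier-space estimates on the commutator structure of the Dirac operator with the projections, which is a substantially different and considerably more delicate mechanism.

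Your discussion of the off-diagonal blocks $\widehat P^{\,+}A_j\widehat P^{\,-}$ correctly identifies where the finite-range hypothesis $A_N=0$ for $2\pi|N_\perp|>R$ is needed and why the projections become nearly complementary for large $|k_\perp+2\pi N_\perp|$; this is indeed the right intuition. But as stated the estimate ``suppressed by a factor of order $R/|k_\perp+2\pi N_\perp|$'' is not derived, and after removing the unjustified gauge reduction the diagonal blocks also need to be re-examined from scratch, since they are no longer small. As a result the proposal, while containing a correct heuristic for the finite-range mechanism, does not constitute a proof of \eqref{3.15}.
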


{\bf Remark 5.} In \cite{Arch}, Theorem \ref{th3.1} was formulated for the case
$a=\widetilde a$. But in the proof of Theorem \ref{th3.1}, only upper bounds for
the number $\widetilde a$ were used. Hence, Theorem \ref{th3.1} is also true for
all $a\in (0,\widetilde a]$ (nevertheless the number $\varkappa _0$ depends on
the number $a$).
\vskip 0.2cm

Under the conditions of Theorem \ref{th3.1}, instead of the vector $\gamma \in 
\Lambda \backslash \{ 0\} $ one can pick the vector $-\gamma $ (without change of
the basis vectors ${\mathcal E}_j\, $, $j=1,\dots ,n$). Then the following changes
are to be made: $e\to -e$, $k_{\| }\to -k_{\| }\, $, $k_{\perp }\to k_{\perp }\, $, 
$N_{\| }\to -N_{\| }\, $, $N_{\perp }\to N_{\perp }$ (for all $k\in {\mathbb R}^n$ 
and all $N\in \Lambda ^*$). Furthermore, the numbers $G_N^{\, \pm }(k;\varkappa )$, 
the sets ${\mathfrak K}_{\pi }(\gamma )$, and the vectors $\widetilde e(k+2\pi N)$ 
do not change, but the orthogonal projections $\widehat P^{\, +}$ and $\widehat P^{\, 
-}$ are replaced by the orthogonal projections $\widehat P^{\, -}$ and $\widehat 
P^{\, +}$, respectively. Therefore, for any $a\in (0,\widetilde a]$ and for all
vectors $k\in {\mathfrak K}_{\pi }(\gamma )$, all vector functions $\phi \in 
\widetilde H^1(K;{\mathbb C}^M)$, and all sufficiently large numbers $\varkappa 
\geq \varkappa _0>0$ (where $\varkappa _0$ does not depend on $k$ and $\phi $), 
the estimate
\begin{equation} \label{3.16}
\| (\, \widehat P^{\, -}+a\, \widehat P^{\, +})\, \widehat {\mathcal D}
(A;k-i \varkappa e)\, \phi \, \| ^2\, \geq (1-\delta )\, \| (\, \widetilde
C_1\, \widehat {\mathbb G}_-\, \widehat P^{\, +}+a\, \widehat 
{\mathbb G}_+\, \widehat P^{\, -}\, )\, \phi \, \| ^2
\end{equation}
is also valid.

For vector functions $\phi \in L^2(K;{\mathbb C}^M)$, we deduce from \eqref{3.15} 
and \eqref{3.16} that
\begin{equation} \label{3.17}
\| (\, \widehat P^{\, +}+a\, \widehat P^{\, -})\, \widehat {\mathcal D}
(A;k+i \varkappa e)\, (\, \widetilde C_1^{\, -1}\, \widehat {\mathbb G}_-^{\, -1}\, 
\widehat P^{\, -}+a^{-1}\, \widehat {\mathbb G}_+^{\, -1}\, \widehat P^{\, +}\, )\,
\phi \, \| ^2\, \geq (1-\delta )\, \| \, \phi \, \| ^2
\end{equation}
and
\begin{equation} \label{3.18}
\| (\, \widehat P^{\, -}+a\, \widehat P^{\, +})\, \widehat {\mathcal D}
(A;k-i \varkappa e)\, (\, \widetilde C_1^{\, -1}\, \widehat {\mathbb G}_-^{\, -1}\, 
\widehat P^{\, +}+a^{-1}\, \widehat {\mathbb G}_+^{\, -1}\, \widehat P^{\, -}\, )\,
\phi \, \| ^2\, \geq (1-\delta )\, \| \, \phi \, \| ^2\, ,
\end{equation}
respectively. Since the norm of a bounded linear operator acting on the Hilbert 
space is equal to the norm of the adjoint operator, we get from the last estimate 
that for all $\phi \in \widetilde H^1(K;{\mathbb C}^M)$ 
\begin{equation} \label{3.19}
\| (\, \widetilde C_1^{\, -1}\, \widehat {\mathbb G}_-^{\, -1}\, 
\widehat P^{\, +}+a^{-1}\, \widehat {\mathbb G}_+^{\, -1}\, \widehat P^{\, -}\, )
\, \widehat {\mathcal D}(A;k+i \varkappa e)\, (\, \widehat P^{\, -}+a\, \widehat 
P^{\, +})\, \phi \, \| ^2\, \geq (1-\delta )\, \| \, \phi \, \| ^2\, .
\end{equation}

The following inequality is a direct consequence of \eqref{3.17} and \eqref{3.19}:
\begin{equation} \label{3.20}
\| (\, \widehat {\mathbb G}_-^{\, -1}\, \widehat P^{\, +}+\widetilde C_1\, a^{-1}\, 
\widehat {\mathbb G}_+^{\, -1}\, \widehat P^{\, -}\, )\, \widehat {\mathcal D}^2
(A;k+i \varkappa e)\, (\, \widehat {\mathbb G}_+^{\, -1}\, \widehat P^{\, +}+
\widetilde C_1^{\, -1}\, a\, \widehat {\mathbb G}_-^{\, -1}\, \widehat P^{\, -}\, )
\, \phi \, \| \, \geq 
\end{equation}
$$
\geq \, \widetilde C_1\, (1-\delta )\, \| \, \phi \, \| \, ,\qquad \phi \in 
\widetilde H^1(K;{\mathbb C}^M)\, . 
$$
The inequality \eqref{3.20} plays a key role in the proof of Theorem \ref{th1.3}.

In the following, we assume that $\delta =\frac 16\, $. By \eqref{3.17} and 
\eqref{3.18}, it follows that ${\mathrm {Ker}}\ \widehat {\mathcal D}(A;k+i \varkappa 
e)={\mathrm {Coker}}\ \widehat {\mathcal D}(A;k+i \varkappa e)=\{ 0\} $. Hence for
the range of the operator $\widehat {\mathcal D}(A;k+i \varkappa e)$, we have 
$R(\widehat {\mathcal D}(A;k+i \varkappa e))=L^2(K;{\mathbb C}^M)$.

Let us denote
$$
\widehat {\mathcal B}(A)=\frac {i }2\, \sum\limits_{j\, \neq \, l}\, \biggl( \, \frac
{\partial A_l}{\partial x_j}-\frac {\partial A_j}{\partial x_l}\, \biggr)\,
\widehat \alpha _j\widehat \alpha _l\, .
$$
The estimate
$$
\| \, \widehat {\mathbb G}_-^{\, -1}\, \widehat P^{\, +}\, \widehat {\mathcal B}(A)
\, \widetilde C_1^{\, -1}\, a\, \widehat {\mathbb G}_-^{\, -1}\, \widehat P^{\, -}\,
\phi \, \| \, \leq
$$
$$
\leq \, \frac {n(n-1)}2\ \frac {|\gamma |^2}{\pi ^2}\ \widetilde C_1^{\, -1}\, a\
\biggl( \, \max\limits_{x\, \in \, K\, ,\ l\, \neq \, j}\ \biggl| \, \frac {\partial 
A_l}{\partial x_j}\, \biggr| \, \biggr) \, \| \, \phi \, \| \, ,\qquad \phi \in
L^2(K;{\mathbb C}^M)\, ,
$$
holds. We choose (and fix) a number $a\in (0,\widetilde a]$ such that
$$
\frac {n(n-1)}2\ \frac {|\gamma |^2}{\pi ^2}\ \widetilde C_1^{\, -1}\, a\
\biggl( \, \max\limits_{x\, \in \, K\, ,\ l\, \neq \, j}\ \biggl| \, \frac {\partial 
A_l}{\partial x_j}\, \biggr| \, \biggr) \, \leq \, \frac 16\ \widetilde C_1\, .
$$
Then there is a sufficiently large number $\varkappa _0>0$ such that for all
$\varkappa \geq \varkappa _0\, $, all $k\in {\mathfrak K}_{\pi }(\gamma )$, and all
$\phi \in L^2(K;{\mathbb C}^M)$
$$
\| (\, \widehat {\mathbb G}_-^{\, -1}\, \widehat P^{\, +}+\widetilde C_1\, a^{-1}\, 
\widehat {\mathbb G}_+^{\, -1}\, \widehat P^{\, -}\, )\, \widehat {\mathcal B}(A) 
\, (\, \widehat {\mathbb G}_+^{\, -1}\, \widehat P^{\, +}+\widetilde C_1^{\, -1}\, 
a\, \widehat {\mathbb G}_-^{\, -1}\, \widehat P^{\, -}\, )\, \phi \, \| \,
\leq \, \frac 13\ \widetilde C_1\, \| \, \phi \, \| \, .
$$
Consequently, by \eqref{3.12} and \eqref{3.20}, it follows that
$$ 
\| (\, \widehat {\mathbb G}_-^{\, -1}\, \widehat P^{\, +}+\widetilde C_1\, a^{-1}\, 
\widehat {\mathbb G}_+^{\, -1}\, \widehat P^{\, -}\, )\, (\widehat H(A;k+i
\varkappa e)\otimes \widehat I_M\, )\, (\, \widehat {\mathbb G}_+^{\, -1}\, \widehat 
P^{\, +}+\widetilde C_1^{\, -1}\, a\, \widehat {\mathbb G}_-^{\, -1}\, \widehat 
P^{\, -}\, )\, \phi \, \| \, \geq 
$$
\begin{equation} \label{3.21}
\frac 12\ \widetilde C_1\, \| \, \phi \, \| \, ,\qquad \phi \in \widetilde
H^1(K;{\mathbb C}^M)\, .
\end{equation}

Since the choice of the matrices $\widehat \alpha _j\, $, $j=1,\dots ,n$, is not
specified, we can replace the matrix $\widehat \alpha _1$ by the matrix $-\widehat 
\alpha _1$ (the commutation relations \eqref{anti} do not change under such
replacement). Then the orthogonal projections $\widehat P^{\, +}$ and $\widehat 
P^{\, -}$ substitute each other, and we obtain from \eqref{3.21} that
$$ 
\| (\, \widehat {\mathbb G}_-^{\, -1}\, \widehat P^{\, -}+\widetilde C_1\, a^{-1}\, 
\widehat {\mathbb G}_+^{\, -1}\, \widehat P^{\, +}\, )\, (\widehat H(A;k+i
\varkappa e)\otimes \widehat I_M\, )\, (\, \widehat {\mathbb G}_+^{\, -1}\, \widehat 
P^{\, -}+\widetilde C_1^{\, -1}\, a\, \widehat {\mathbb G}_-^{\, -1}\, \widehat 
P^{\, +}\, )\, \phi \, \| \, \geq 
$$
\begin{equation} \label{3.22}
\frac 12\ \widetilde C_1\, \| \, \phi \, \| \, ,\qquad \phi \in \widetilde
H^1(K;{\mathbb C}^M)\, .
\end{equation}

Inequalities \eqref{3.21} and \eqref{3.22} imply that ${\mathrm {Ker}}\ \widehat 
H(A;k+i \varkappa e)\otimes \widehat I_M={\mathrm {Coker}}\ \widehat H(A;k+i
\varkappa e)\otimes \widehat I_M=\{ 0\} $ and $R(\widehat H(A;k+i \varkappa e)
\otimes \widehat I_M)=L^2(K;{\mathbb C}^M)$. Hence,
$$
{\mathrm {Ker}}\ \widehat H(A;k+i \varkappa e)={\mathrm {Coker}}\ \widehat H(A;k+
i \varkappa e)=\{ 0\}
$$
(and $D(\widehat H(A;k+i \varkappa e)=\widetilde H^2(K)$, $R(\widehat H(A;k+i
\varkappa e)=L^2(K)$).

Now let us rewrite inequalities \eqref{3.21} and \eqref{3.22} in the form
$$ \label{3.23}
\| (\, \widehat {\mathbb G}_+\, \widehat P^{\, +}+\widetilde C_1\, a^{-1}\, 
\widehat {\mathbb G}_-\, \widehat P^{\, -}\, )\, (\widehat H^{\, -1}(A;k+i
\varkappa e)\otimes \widehat I_M\, )\, (\, \widehat {\mathbb G}_-\, \widehat 
P^{\, +}+\widetilde C_1^{\, -1}\, a\ \widehat {\mathbb G}_+\, \widehat 
P^{\, -}\, )\, \phi \, \| \, \leq 
$$
\begin{equation} \label{3.23}
2\, \widetilde C_1^{\, -1}\, \| \, \phi \, \| \, ,\qquad \phi \in \widetilde
H^1(K;{\mathbb C}^M)\, ,
\end{equation}
$$ 
\| (\, \widehat {\mathbb G}_+\, \widehat P^{\, -}+\widetilde C_1\, a^{-1}\, 
\widehat {\mathbb G}_-\, \widehat P^{\, +}\, )\, (\widehat H^{\, -1}(A;k+i
\varkappa e)\otimes \widehat I_M\, )\, (\, \widehat {\mathbb G}_-\, \widehat 
P^{\, -}+\widetilde C_1^{\, -1}\, a\ \widehat {\mathbb G}_+\, \widehat 
P^{\, +}\, )\, \phi \, \| \, \leq 
$$
\begin{equation} \label{3.24}
2\, \widetilde C_1^{\, -1}\, \| \, \phi \, \| \, ,\qquad \phi \in \widetilde
H^1(K;{\mathbb C}^M)\, .
\end{equation}
For all $\zeta \in {\mathbb C}$ (and for fixed $\varkappa \geq \varkappa _0\, 
$, $k\in {\mathfrak K}_{\pi }(\gamma )$, and $a\in (0,\widetilde a]$) define
the operators
$$
\widehat {\mathcal Q}(\zeta )\, =\, (\, \widehat {\mathbb G}_+^{\, 1-\zeta }\,
(\widetilde C_1\, a^{-1})^{\, \zeta }\, \widehat {\mathbb G}_-^{\, \zeta }\, 
\widehat P^{\, +}+\widehat {\mathbb G}_+^{\, \zeta }\, (\widetilde C_1\, a^{-1})
^{\, 1-\zeta }\, \widehat {\mathbb G}_-^{\, 1-\zeta }\, \widehat P^{\, -}\, )\,
\times
$$
$$
(\widehat H^{\, -1}(A;k+i \varkappa e)\otimes \widehat I_M\, )\, (\, 
\widehat {\mathbb G}_-^{\, 1-\zeta }\, (\widetilde C_1^{\, -1}\, a)^{\, \zeta }\, 
\widehat {\mathbb G}_+^{\, \zeta }\, \widehat P^{\, +}+\widehat {\mathbb G}_-^{\, 
\zeta }\, (\widetilde C_1^{\, -1}\, a)^{\, 1-\zeta }\, \widehat {\mathbb G}_+^{\, 1-
\zeta }\, \widehat P^{\, -}\, )\, ,
$$
$D(\widehat {\mathcal Q}(\zeta ))=\widetilde H^1(K;{\mathbb C}^M)\subset L^2(K;
{\mathbb C}^M)$. For all $\phi \in \widetilde H^1(K;{\mathbb C}^M)$, the function 
${\mathbb C}\ni \zeta \to \widehat {\mathcal Q}(\zeta )\phi \in L^2(K;{\mathbb C}^M)$ 
is uniformly bounded for $0\leq {\mathrm {Re}}\ \zeta \leq 1$ (see \eqref{3.23} 
and \eqref{3.24}) and analytic for $0<{\mathrm {Re}}\ \zeta <1$. If ${\mathrm {Re}}\ 
\zeta =0$ or ${\mathrm {Re}}\ \zeta =1$, then \eqref{3.23} and \eqref{3.24} imply
that
\begin{equation} \label{3.25}
\| \widehat {\mathcal Q}(\zeta )\phi \| \, \leq \, 2\, \widetilde C_1^{\, -1}\, 
\| \, \phi \, \| \, .
\end{equation}
Therefore estimate \eqref{3.25} is true for all $\zeta \in {\mathbb C}$ with
$0\leq {\mathrm {Re}}\ \zeta \leq 1$. In particular, for $\zeta =\frac 12$, we
have 
$$
\| \, \widehat {\mathbb L}^{\, 1/2}\, (\widehat H^{\, -1}(A;k+i \varkappa e)
\otimes \widehat I_M\, )\, \widehat {\mathbb L}^{\, 1/2}\, \phi \, \| \, \leq
\, 2\, \widetilde C_1^{\, -1}\, \| \, \phi \, \| \, ,\qquad \phi \in \widetilde
H^1(K;{\mathbb C}^M)\, ,
$$
and hence for all $\varkappa \geq \varkappa _0\, $, all $k\in {\mathfrak K}_{\pi }
(\gamma )$, and all $\phi \in \widetilde H^1(K)$
$$
\| \, \widehat L^{\, 1/2}\, \widehat H^{\, -1}(A;k+i \varkappa e)\, \widehat 
L^{\, 1/2}\, \phi \, \| \, \leq \, 2\, \widetilde C_1^{\, -1}\, \| \, \phi \, \| 
\, .
$$
Whence
\begin{equation} \label{3.26}
\| \, \widehat L^{\, -1/2}\, \widehat H(A;k+i \varkappa e)\, \widehat 
L^{\, -1/2}\, \phi \, \| \, \geq \, \frac 12\ \widetilde C_1\, \| \, \phi \, \| 
\, ,\qquad \phi \in \widetilde H^1(K)\, .
\end{equation}
By continuity, the last estimate extends to all vectors $k\in {\mathbb R}^n$ with 
$|(k,\gamma )|=\pi $. Finally, let $C_1=\frac 12\, \widetilde C_1\, $. Then 
estimate \eqref{a} follows from \eqref{3.26} for all $\varkappa \geq \varkappa _0\, 
$, all vectors $k\in {\mathbb R}^n$ with $|(k,\gamma )|=\pi $, and all functions 
$\phi \in \widetilde H^2(K)$. Since the set $\widetilde H^2(K)$ is dense in
$\widetilde H^1(K)$ and the form $W(A;k+i \varkappa e;\psi ,\phi )$ is continuous in
functions $\psi $ and $\phi $ from the Sobolev class $\widetilde H^1(K)$, estimate 
\eqref{a} is also valid for all functions $\phi \in \widetilde H^1(K)$. This 
completes the proof of Theorem \ref{th1.3}.

\vskip 0.5cm

\end{document}